\documentclass[journal]{IEEEtran}

\usepackage{amsmath,amssymb,amsfonts}
\usepackage{amsthm}
\usepackage{algorithmic,algorithm}
\usepackage{graphicx}
\usepackage{booktabs,multirow}
\usepackage[table]{xcolor}
\usepackage{hyperref}
\usepackage{subcaption}
\usepackage{enumitem}
\usepackage{cite}
\usepackage{bm}
\usepackage{pifont}

\newtheorem{definition}{Definition}
\newtheorem{proposition}{Proposition}
\newtheorem{theorem}{Theorem}
\newtheorem{problem}{Problem}
\newtheorem{remark}{Remark}
\newtheorem{observation}{Observation}

\newcommand{\up}[1]{{\color{green!60!black}\scriptsize$\uparrow$#1}}
\newcommand{\down}[1]{{\color{red!70!black}\scriptsize$\downarrow$#1}}

\begin{document}

\title{\textsc{EGRefine}: An Execution-Grounded Optimization\\
Framework for Text-to-SQL Schema Refinement}

\author{Jiaqian Wang, Yutao Qi$^*$, Wenjin Hou, Yu Pang, and Rui Yang\\
Xidian University, Xi'an, Shaanxi 710071, China\\
Email: 24151111272@stu.xidian.edu.cn,\;
ytqi@xidian.edu.cn,\;
houwenjinmail@gmail.com\\
\normalfont{$^*$Corresponding author}%
}

\maketitle

\begin{abstract}
Text-to-SQL enables non-expert users to query databases in natural language,
yet real-world schemas often suffer from ambiguous, abbreviated, or inconsistent
naming conventions that degrade model accuracy.
Existing approaches treat schemas as fixed and address errors downstream.
In this paper, we frame schema refinement as a constrained optimization problem:
find a renaming function that maximizes downstream Text-to-SQL execution accuracy
while preserving query equivalence through database views.
We analyze the computational hardness of this problem,
which motivates a column-wise greedy decomposition,
and instantiate it as \textsc{EGRefine}: a four-phase
pipeline that screens ambiguous columns, generates
context-aware candidate names, verifies them through
execution-grounded feedback, and materializes the result
as non-destructive SQL views.
The pipeline carries two structural properties:
column-local non-degradation, ensured by the conservative
selection rule in the verification phase, and
database-level query equivalence, ensured by the view-based
materialization phase.
Together they make the resulting refinement safe by construction
\emph{at the column level}, with cross-column and prompt-level
interactions handled empirically rather than analytically.
Across controlled schema-degradation, real-world, and
enterprise benchmarks, \textsc{EGRefine} recovers accuracy
lost to schema naming noise where applicable and correctly
abstains where the underlying task exceeds current Text-to-SQL
capabilities, with refined schemas transferring across model
families to enable refine-once, serve-many-models deployment.
Code and data are publicly available at
\url{https://github.com/ai-jiaqian/EGRefine}.
\end{abstract}

\begin{IEEEkeywords}
Text-to-SQL, Schema Refinement, Execution Feedback, Database Views, Optimization
\end{IEEEkeywords}


\section{Introduction}
\label{sec:intro}

Text-to-SQL---the task of translating natural-language questions
into executable SQL queries---has emerged as a pivotal technology
for democratizing data access~\cite{katsogiannis2023survey,luo2025nl2sql}.
Driven by rapid advances in large language models (LLMs),
recent systems have achieved impressive performance on
standard benchmarks~\cite{pourreza2023dinsql,gao2023dailsql,wang2025macsql,talaei2024chess},
with execution accuracies exceeding 85\% on the widely used
Spider dataset~\cite{yu2018spider}.

However, a persistent gap remains between benchmark performance
and production reliability.
Real-world databases are frequently characterized by
inconsistent naming standards, ad-hoc abbreviations, and sparse
documentation~\cite{furst2024robustness,renggli2025challenges}.
Columns named \texttt{A2}, \texttt{nm}, \texttt{sal}, or
\texttt{dt} carry little semantic information, forcing
Text-to-SQL models to rely on fragile heuristic cues during
schema linking~\cite{wang2020ratsql}.
Diagnostic evaluations confirm that even state-of-the-art
systems suffer substantial accuracy degradation when schema
names are perturbed or
obfuscated~\cite{chang2023drspider,li2023bird},
establishing schema-level naming quality as a critical yet
underexplored bottleneck for robust Text-to-SQL.

The prevailing response has been to treat schema quality as a
fixed background condition and address its consequences
\emph{downstream}: constrained decoding improves syntactic
validity but not naming ambiguity~\cite{scholak2021picard};
error-correction and self-refinement repair generated SQL after
the fact~\cite{chen2023errorcorrection,qu2025share,mao2024executionguided},
yet schema-linking errors rooted in opaque names persist;
interactive clarification shifts the burden to
users~\cite{yu2019cosql,elgohary2020speak,tian2023interactive},
impractical at scale; and recent ambiguity benchmarks probe
query-level perturbations rather than schema-internal
defects~\cite{bhaskar2023ambiqt,saparina2024ambrosia,qiu2025practiq}.
The schema itself---the root cause---remains untreated.

A small but growing line of work has begun to consider
\emph{upstream} schema-level interventions.
Odin~\cite{ding2025odin} recommends multiple SQL
interpretations for ambiguous schemas at query time.
CLEAR~\cite{zhao2025clear} provides a parser-independent
disambiguation framework.
While these efforts represent important steps, they share
two fundamental limitations.
First, they operate at \emph{query time}, incurring per-query
overhead without producing durable schema-level improvements.
Second, they lack a \emph{grounded optimization objective}:
the quality of a schema modification is assessed by linguistic
plausibility rather than by its measurable effect on downstream
task performance.

In this paper, we propose a fundamentally different approach.
We observe that schema quality, viewed through the lens of
Text-to-SQL, can be defined objectively as the downstream
execution accuracy it induces.
This allows us to formalize schema refinement as a constrained
optimization problem: find a renaming of schema elements that
maximizes execution accuracy while preserving query equivalence
via SQL views.
We analyze the computational hardness---the search space is
exponential (Observation~\ref{prop:space}) and the constrained
problem is at least NP-hard (Theorem~\ref{thm:nphard})---and
propose \textsc{EGRefine} (Execution-Grounded Refinement),
a four-phase pipeline whose key innovation is
\emph{execution-grounded verification}: rather than relying on
an LLM's semantic judgment to select column names, we use
downstream SQL execution results as the selection signal,
treating execution accuracy as a reward function for schema
refinement.
This applies execution feedback at a novel granularity---not to
the generated query as in prior
work~\cite{chen2023codet,chen2024selfdebugging,shinn2023reflexion},
but to the schema itself.

Our contributions are as follows:

\begin{enumerate}[leftmargin=2em]
  \item \textbf{Formal Problem Framing.}\;
    We cast schema refinement for Text-to-SQL as a constrained
    optimization problem with a task-grounded objective
    (Definition~\ref{def:quality}), analyze its computational
    complexity (\S\ref{ssec:hardness}), and establish two
    structural properties: \emph{column-local non-degradation}
    (each isolated renaming is constrained to be non-harmful on
    its query subset, Proposition~\ref{prop:mono}) and
    \emph{query equivalence at the database level} under
    standard DBMS view-expansion semantics
    (Proposition~\ref{thm:equiv}).

  \item \textbf{The \textsc{EGRefine} Framework.}\;
    We propose a four-phase pipeline operationalizing the
    optimization framework: heuristic pruning reduces the search
    space (\S\ref{ssec:phase1}), LLM generation proposes candidate
    renamings (\S\ref{ssec:phase2}), execution-grounded verification
    selects the optimal candidate via downstream SQL execution
    (\S\ref{ssec:phase3}), and VIEW synthesis materializes the
    result as a non-destructive semantic layer
    (\S\ref{ssec:phase4}).
    The pipeline is model-agnostic: any downstream Text-to-SQL
    system benefits from the refined schema without modification.

  \item \textbf{Comprehensive Empirical Evaluation.}\;
    We evaluate on three benchmarks playing complementary evidentiary roles:
    Dr.Spider~\cite{chang2023drspider} establishes the mechanism under
    controlled degradation;
    BIRD~\cite{li2023bird} shows smaller but measurable gains on natural
    schemas, with workload-holdout (\S\ref{ssec:rq_holdout}) confirming
    persistence on unseen queries;
    BEAVER~\cite{chen2024beaver} delineates the applicability boundary,
    where SQL-generation rather than schema-linking is the bottleneck and
    Phase~3 correctly abstains.
    Regressions on well-named schemas are rare and small under the
    conservative rule, ablations confirm execution-grounded verification
    significantly outperforms pure LLM-based selection, and benefits amplify
    for weaker models---the setting most relevant to cost-sensitive deployments.
    Our mechanism analysis (\S\ref{sec:analysis}) additionally yields
    a coverage--improvement linearity that makes expected impact predictable
    at deployment time.
\end{enumerate}


\section{Related Work}
\label{sec:related}

We review three lines of research relevant to our work:
Text-to-SQL systems, schema ambiguity and robustness, and
execution-guided methods.

\subsection{Text-to-SQL Systems}
\label{ssec:rw_text2sql}

Text-to-SQL has progressed through several paradigm shifts:
early sequence-to-sequence
generation~\cite{zhong2017seq2sql} and schema-aware
encoders~\cite{wang2020ratsql,lin2020bridge}; constrained decoding for
syntactic validity~\cite{scholak2021picard}; and decoupled
schema-linking-then-parsing
architectures~\cite{li2023resdsql}.
The advent of LLMs shifted the dominant paradigm toward
prompt-based methods such as
DIN-SQL~\cite{pourreza2023dinsql} (decomposed prompting),
DAIL-SQL~\cite{gao2023dailsql} (token-efficient skeleton-based
prompts), and multi-agent systems including
MAC-SQL~\cite{wang2025macsql} and
CHESS~\cite{talaei2024chess}; fine-tuned variants such as
CodeS~\cite{li2024codes} also appear in this landscape.
Recent work~\cite{maamari2024deathsl} questions whether
explicit schema linking is needed at all when LLMs have
sufficient context; in either regime, schema-side quality
remains a precondition for accurate generation.
Comprehensive surveys
include~\cite{katsogiannis2023survey,xinyu2025survey,hong2025survey,luo2025nl2sql}.

All of the above treat the database schema as a fixed input.
\textsc{EGRefine} is orthogonal: it operates as an upstream
preprocessing layer that refines the schema before any
Text-to-SQL system is invoked, composable with any of the
above without modification.

\subsection{Schema Ambiguity and Robustness}
\label{ssec:rw_ambiguity}

\subsubsection{Robustness Evaluation}

Earlier benchmarks established schema-side brittleness: 
Spider-Syn~\cite{gan2021spidersyn} showed accuracy collapses 
under synonym substitution that breaks lexical matching 
between question tokens and schema names, and 
Spider-DK~\cite{gan2021spiderdk} extended this to domain-knowledge 
generalization.
Dr.Spider~\cite{chang2023drspider} subsequently introduced 17 perturbation
types---including schema-level modifications---and showed
substantial accuracy drops even for the strongest models.
F\"{u}rst et al.~\cite{furst2024robustness} evaluated robustness
to data model variations using real user queries, while
Renggli et al.~\cite{renggli2025challenges} identified
fundamental challenges in evaluating Text-to-SQL under
realistic conditions.
These studies diagnose rather than resolve schema-side brittleness.

\subsubsection{Query-Level Ambiguity}

A separate line of work addresses ambiguity in the
\emph{question} rather than the
\emph{schema}~\cite{bhaskar2023ambiqt,saparina2024ambrosia,qiu2025practiq},
including interactive disambiguation that solicits user
clarification~\cite{elgohary2020speak,tian2023interactive,qiu2025disambiguation}
and detection of unanswerable
questions~\cite{wang2023ambiguity}.
These are complementary: we address ambiguity in the schema,
not the question.

\subsubsection{Schema-Level Interventions}

Several benchmarks acknowledge schema quality as a real-world
challenge: BIRD~\cite{li2023bird} provides supplementary
knowledge descriptions alongside schemas;
Spider 2.0~\cite{lei2024spider2} features enterprise-scale
schemas with over 1{,}000 columns and domain-specific
abbreviations;
BEAVER~\cite{chen2024beaver} provides enterprise data
warehouse schemas with extremely low Text-to-SQL baselines;
and BenchPress~\cite{wenz2026benchpress} addresses enterprise
schema ambiguity through human-in-the-loop annotation.

Two recent systems directly tackle schema-level disambiguation.
Odin~\cite{ding2025odin} recommends multiple SQL candidates
based on alternative schema interpretations and learns from
user feedback.
CLEAR~\cite{zhao2025clear} provides a parser-independent
framework that generates and selects among candidate SQL
interpretations.
Chen et al.~\cite{chen2025reliable} address schema linking
uncertainty through adaptive abstention.
Closer to our setting, SNAILS~\cite{luoma2025snails} shows that
schema-identifier naturalness materially affects LLM-based
NL-to-SQL accuracy and proposes natural views as a mitigation;
\textsc{EGRefine} extends this direction by tying view
construction to an execution-grounded optimization objective
rather than a naturalness classifier.
These methods operate at \emph{query time}---each incoming
question triggers disambiguation---and evaluate schema
modifications through linguistic plausibility or user feedback.
\textsc{EGRefine} differs along two axes: it performs schema
refinement \emph{once} as offline preprocessing, producing a
durable refined schema; and it selects refinements through a
grounded optimization objective tied directly to downstream
execution accuracy.

\subsection{Execution-Guided Methods}
\label{ssec:rw_execution}

In Text-to-SQL, execution feedback has been used to constrain
or re-rank candidate
queries~\cite{scholak2021picard}, iteratively repair
SQL~\cite{mao2024executionguided,chen2023errorcorrection},
and drive self-correction loops in multi-agent
frameworks~\cite{wang2025macsql,chaturvedi2025sqlofthought}.
Beyond SQL, execution-based verification is effective in
general code generation, including dual-execution
agreement~\cite{chen2023codet}, self-debugging from
execution traces~\cite{chen2024selfdebugging}, and verbal
reinforcement learning over execution
feedback~\cite{shinn2023reflexion}.

Our work draws inspiration from this paradigm but applies
execution feedback at a different granularity.
Prior methods use execution signals to improve the
\emph{generated output} while holding the schema constant;
\textsc{EGRefine} uses them to improve the \emph{input
representation} itself, producing a durable refined schema
that benefits all subsequent queries without per-query loops.
Table~\ref{tab:comparison} summarizes the distinctions.

\begin{table}[t]
\centering
\caption{Comparison with closely related approaches.}
\label{tab:comparison}
\renewcommand{\arraystretch}{1.15}
\setlength{\tabcolsep}{3pt}
\begin{tabular}{lccccc}
\toprule
\textbf{Method} &
\rotatebox{65}{\textbf{Upstream}} &
\rotatebox{65}{\textbf{Offline}} &
\rotatebox{65}{\textbf{Exec.-grounded}} &
\rotatebox{65}{\textbf{Model-agnostic}} &
\rotatebox{65}{\textbf{Non-destructive}} \\
\midrule
PICARD~\cite{scholak2021picard}          & \ding{55} & \ding{55} & \ding{55} & \ding{55} & -- \\
DART-SQL~\cite{mao2024executionguided}   & \ding{55} & \ding{55} & \ding{51} & \ding{55} & -- \\
Odin~\cite{ding2025odin}                 & \ding{51} & \ding{55} & \ding{55} & \ding{51} & \ding{51} \\
CLEAR~\cite{zhao2025clear}               & \ding{51} & \ding{55} & \ding{55} & \ding{51} & \ding{51} \\
\textbf{EGRefine (Ours)}                 & \ding{51} & \ding{51} & \ding{51} & \ding{51} & \ding{51} \\
\bottomrule
\end{tabular}
\end{table}


\section{Problem Formulation}
\label{sec:formulation}

In this section, we formalize schema refinement as a constrained
optimization problem, analyze its computational hardness, and
present our decomposition strategy with associated structural
properties.

\subsection{Preliminaries and Notation}
\label{ssec:prelim}

\textbf{Database Schema.}
A relational database is described by a schema
$\mathcal{S} = (\mathcal{T}, \mathcal{C}, \mathcal{F})$,
where $\mathcal{T} = \{t_1, \dots, t_n\}$ is the set of tables,
$\mathcal{C} = \{c_1, \dots, c_m\}$ is the set of columns,
and $\mathcal{F} \subseteq \mathcal{C} \times \mathcal{C}$ encodes foreign-key relationships.
Each column $c \in \mathcal{C}$ belongs to a unique table $\mathrm{tab}(c) \in \mathcal{T}$
and carries a \emph{surface name} $\mathrm{name}(c)$---the identifier visible to downstream systems.
We define the \emph{scope} of a column $c$ as the set of columns that share
its naming namespace:
\begin{multline}
  \mathrm{scope}(c) = \{c' \in \mathcal{C} \mid \mathrm{tab}(c') = \mathrm{tab}(c)\} \\
  \cup\; \{c' \mid (c, c') \in \mathcal{F} \text{ or } (c', c) \in \mathcal{F}\}.
  \label{eq:scope}
\end{multline}
Scope includes same-table columns (where SQL forbids duplicate
column names) and FK-related columns across tables (where duplicates
create joining ambiguity).
Cross-table homonyms without FK relationships are permitted by
SQL and left untouched (\S\ref{ssec:phase1}).

\textbf{Text-to-SQL Task.}
A Text-to-SQL model $M$ takes a natural-language question $\mathit{nl}$
and a schema $\mathcal{S}$ as input, and produces a predicted SQL query
$M(\mathit{nl}, \mathcal{S})$.
Performance is measured by \emph{execution accuracy} (ExAcc):
given a benchmark $Q = \{(\mathit{nl}_j, \mathit{sql}_j^*)\}_{j=1}^{|Q|}$
of question--gold-SQL pairs, we define
\begin{multline}
  \mathrm{ExAcc}(M, \mathcal{S}, Q)
  = \frac{1}{|Q|} \sum_{j=1}^{|Q|}
    \mathbb{1}\bigl[
      \mathrm{exec}(M(\mathit{nl}_j, \mathcal{S})) \\
      = \mathrm{exec}(\mathit{sql}_j^*)
    \bigr],
  \label{eq:exacc}
\end{multline}
where $\mathrm{exec}(\cdot)$ denotes the result set obtained by executing
a query on the underlying database.

\subsection{Schema Refinement as Optimization}
\label{ssec:objective}

We adopt a \emph{task-grounded} perspective: schema quality is
determined solely by its effect on downstream Text-to-SQL
performance, avoiding subjective ambiguity taxonomies.

\begin{definition}[Refinement Mapping]
\label{def:refinement}
A \emph{refinement mapping} $r: \mathcal{C} \to \Sigma^*$ assigns
a (possibly new) surface name $r(c)$ to each column $c \in \mathcal{C}$,
where $\Sigma^*$ is the set of valid SQL identifiers.
The identity mapping $r_\mathrm{id}$ corresponds to the unchanged
schema; we write $r(\mathcal{S})$ for the schema obtained by
applying $r$.
A refinement is \emph{admissible} if (i) no two columns in the
same scope receive identical names, and (ii) only surface names
are modified (data types, constraints, foreign keys, and table
structure remain unchanged).
We denote the set of admissible refinements by $\mathcal{R}(\mathcal{S})$.
\end{definition}

\begin{definition}[Schema Quality]
\label{def:quality}
Given $\mathcal{S}$, a set of Text-to-SQL models
$\mathcal{M} = \{M_1, \dots, M_l\}$, and a query set $Q$:
\begin{equation}
  \mathrm{Quality}(\mathcal{S},\, \mathcal{M},\, Q)
  = \frac{1}{|\mathcal{M}|}
    \sum_{M_j \in \mathcal{M}} \mathrm{ExAcc}(M_j,\, \mathcal{S},\, Q).
  \label{eq:quality}
\end{equation}
Averaging over multiple models ensures the metric reflects
general schema clarity rather than single-system idiosyncrasies.
\end{definition}

\begin{problem}[Optimal Schema Refinement]
\label{prob:main}
Given $\mathcal{S}$, $\mathcal{M}$, and $Q$, find:
\begin{equation}
  r^* = \arg\max_{r \,\in\, \mathcal{R}(\mathcal{S})}\;
         \mathrm{Quality}\bigl(r(\mathcal{S}),\, \mathcal{M},\, Q\bigr),
  \label{eq:optim}
\end{equation}
subject to: for every SQL query $q'$ over $r(\mathcal{S})$,
$\mathrm{exec}(q',\,r(\mathcal{S})) = \mathrm{exec}(q,\,\mathcal{S})$
for the semantically equivalent query $q$ over $\mathcal{S}$.
\end{problem}

The equivalence constraint ensures refinement is
non-destructive: the original database remains untouched.
Proposition~\ref{thm:equiv} shows this is satisfied by
construction when the refined schema is materialized as SQL
views over the original tables.

\begin{remark}
A column $c$ is considered \emph{refinement-improvable} iff
there exists an alternative name $c'$ such that replacing
$\mathrm{name}(c)$ with $c'$ strictly increases
$\mathrm{Quality}$.
This operational definition sidesteps ambiguity taxonomies
and ties refinement directly to measurable task improvement.
\end{remark}

\subsection{Hardness Analysis}
\label{ssec:hardness}

\begin{observation}[Exponential Search Space]
\label{prop:space}
Let $|\mathcal{C}| = m$ and suppose each column has $k$ candidates
(including the original).
Then $|\mathcal{R}(\mathcal{S})| = O(k^m)$: for $m{=}100$ and
$k{=}3$, $|\mathcal{R}(\mathcal{S})| > 5 \times 10^{47}$,
rendering exhaustive search infeasible.
\end{observation}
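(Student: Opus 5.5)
The plan is to bound $|\mathcal{R}(\mathcal{S})|$ from above by the size of the unconstrained product space, and then to evaluate the example numerically.

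\textbf{Upper bound.} Fix for each column $c\in\mathcal{C}$ its candidate set $K_c$, with $|K_c| = k$ and $\mathrm{name}(c)\in K_c$. Under Definition~\ref{def:refinement}, the only freedom in a refinement is the choice of surface name for each column, since types, constraints, foreign keys and table structure are fixed. So a refinement restricted to these candidates is exactly an element of the Cartesian product $\prod_{c\in\mathcal{C}} K_c$, which has $k^m$ elements. Admissibility condition (i) only removes tuples in which two columns of the same scope (Eq.~\eqref{eq:scope}) share a name. Hence $\mathcal{R}(\mathcal{S})\subseteq \prod_c K_c$, and therefore $|\mathcal{R}(\mathcal{S})|\le k^m$, which gives $O(k^m)$.

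\textbf{Infeasibility of exhaustive search.} To justify the conclusion that exhaustive search is infeasible, I also need the bound to be essentially tight. The key step is a counting argument on collisions. The original schema is itself admissible, so original names are pairwise distinct within each scope. Assume, as in the intended setting, that the generated candidates for different columns within a scope are pairwise distinct. Then every tuple is admissible and $|\mathcal{R}(\mathcal{S})| = k^m$ exactly.

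If collisions are allowed, a weaker argument still works. Suppose every scope has size at most $s$ and $k > s$. Order the columns and choose names greedily: each column has at least $k-(s-1)$ choices that avoid the names already taken in its scope. This gives $|\mathcal{R}(\mathcal{S})|\ge (k-s+1)^m$, which is still exponential in $m$.

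\textbf{Numerical example.} For $m=100$ and $k=3$, compute $3^{100} = 10^{100\log_{10}3}$. Since $\log_{10}3 \approx 0.47712$, this gives $3^{100}\approx 10^{47.712}\approx 5.15\times 10^{47} > 5\times 10^{47}$. Each evaluation of $\mathrm{Quality}$ requires running every model in $\mathcal{M}$ on every query in $Q$, so enumerating this space is clearly infeasible.

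\textbf{Main obstacle.} The delicate step is the lower bound. The statement's ``$O(k^m)$'' is literally only an upper bound, and the infeasibility claim requires the count to actually be of that size. I would handle this with the distinct-candidates assumption, under which the count is exactly $k^m$, and note the greedy $(k-s+1)^m$ bound as the fallback when collisions occur.
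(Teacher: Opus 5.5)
Your counting matches what the paper has in mind. The paper offers no proof beyond identifying the search space with the product of per-column candidate lists and evaluating $3^{100}\approx 5.15\times 10^{47}$, and your upper bound and arithmetic are correct. You are also right, where the paper is silent, that ``$>5\times 10^{47}$'' is a lower-bound claim that an $O(k^m)$ bound cannot supply.

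The step that does not hold up is how you secure that lower bound. You infer that original names are distinct within each scope from admissibility of the original schema, but this does not follow from Definition~\ref{def:refinement}. $\mathrm{scope}(c)$ contains FK-related columns, and these often carry the same name as the key they reference, so the identity mapping is frequently inadmissible. The paper's own PK$\to$FK propagation even deliberately assigns equal names to such pairs.

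This matters because $3^{100}$ exceeds $5\times 10^{47}$ by only about 3\%. Suppose a single pair of columns that condition (i) forces to differ shares even one candidate name. Then the $3^{98}$ tuples giving both columns that name are excluded, leaving at most $8\cdot 3^{98}\approx 4.58\times 10^{47}$. So for $m=100$ and $k=3$, the numerical claim holds precisely when the full candidate lists, original names included, are pairwise disjoint for every constrained pair. You need to state this as an explicit hypothesis (it is what the paper tacitly assumes); it is not a consequence of admissibility.

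Your greedy fallback cannot substitute for it. With $k=3$ it requires scopes of at most two columns, and as soon as any scope has two it yields only $2^{100}\approx 1.3\times 10^{30}$. That supports the qualitative infeasibility conclusion but not the stated bound.
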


To establish hardness rigorously, we introduce a constrained
decision variant isolating the combinatorial core.

\begin{definition}[Constrained Refinement Decision Problem]
\label{def:cr_decide}
\textsc{Constrained-Refinement-Decision (CRD).}
Given a schema $\mathcal{S}$ with column set
$\mathcal{C} = \{c_1,\dots,c_n\}$, within-scope conflict relation
$E \subseteq \mathcal{C} \times \mathcal{C}$, candidate lists
$\mathcal{L}(c_i) \subseteq \Sigma^{+}$
($|\mathcal{L}(c_i)| \geq 1$), and a forced-rename subset
$\mathcal{C}^{\dagger} \subseteq \mathcal{C}$,
does there exist an assignment $r : \mathcal{C} \to \Sigma^{+}$
satisfying (i) $r(c_i) \in \mathcal{L}(c_i)$ for all $i$;
(ii) $r(c_i) \neq r(c_j)$ for all $(c_i, c_j) \in E$; and
(iii) $r(c_i) \neq c_i^{0}$ for all
$c_i \in \mathcal{C}^{\dagger}$?
\end{definition}

\begin{theorem}[NP-hardness of Constrained Refinement]
\label{thm:nphard}
\textsc{Constrained-Refinement-Decision} is NP-hard.
The optimization version of Problem~\ref{prob:main}---maximizing
$\mathrm{Quality}$ over assignments satisfying the same
constraints---is therefore also NP-hard.
\end{theorem}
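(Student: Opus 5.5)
We reduce from \textsc{Graph $k$-Coloring} with $k \ge 3$, which is NP-complete. The CRD instance is essentially a list-coloring problem: conflict edges forbid equal names, and candidate lists restrict which names are available. Given a graph $G=(V,E_G)$ and an integer $k$, we build the CRD instance as follows.

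\begin{itemize}
\item Create one column $c_v$ for each vertex $v\in V$.
\item Set the conflict relation $E = \{(c_u,c_v) : \{u,v\}\in E_G\}$.
\item Give every column the same list $\mathcal{L}(c_v)=\{\sigma_1,\dots,\sigma_k\}$ of $k$ distinct valid identifiers.
\item Take $\mathcal{C}^{\dagger}=\emptyset$.
\end{itemize}

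Alternatively, we can choose original names $c_v^0$ outside the list and put every column in $\mathcal{C}^\dagger$, so that constraint~(iii) holds vacuously. This keeps the instance faithful to the ``forced rename'' intent. The construction runs in time polynomial in $|V|+|E_G|+k$.

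\textbf{Correctness.} A proper $k$-coloring $\chi:V\to[k]$ yields the assignment $r(c_v)=\sigma_{\chi(v)}$.
\begin{itemize}
\item Constraint~(i) holds because each $\sigma_{\chi(v)}$ lies in the list.
\item Constraint~(ii) holds because adjacent vertices receive distinct colors, hence distinct names.
\item Constraint~(iii) is vacuous, or holds since no $\sigma_i$ equals any $c_v^0$.
\end{itemize}
Conversely, any feasible $r$ maps each column into $\{\sigma_1,\dots,\sigma_k\}$ with adjacent columns distinct. Reading off indices therefore gives a proper $k$-coloring. So $G$ is $k$-colorable iff the CRD instance is a yes-instance, and CRD is NP-hard.

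\textbf{Realizability as a schema.} The main subtlety is that $E$ should be a genuine ``within-scope conflict relation'' induced by a schema via \eqref{eq:scope}, rather than an arbitrary graph. Under Definition~\ref{def:cr_decide}, $E$ is simply given as input. If we insist it arise from a schema, we place each $c_v$ in its own table and add a foreign-key pair $(c_u,c_v)\in\mathcal{F}$ for each edge. By \eqref{eq:scope}, $c_u$ and $c_v$ are then in each other's scope exactly when $\{u,v\}\in E_G$. The single-column tables contribute only trivial same-table scope members, so the induced conflict graph is exactly $G$. The only remaining check is that a foreign key between arbitrary columns is syntactically allowed; cycles of FKs are permitted in SQL, so this works.

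\textbf{Optimization version.} Any algorithm that maximizes $\mathrm{Quality}$ over assignments satisfying (i)--(iii) must, in particular, report whether the feasible set is nonempty. For example, it can output ``infeasible'' or return a witness that we check in polynomial time. Hence it decides CRD, and the optimization problem is NP-hard under the standard convention that an optimization problem is at least as hard as its feasibility question. If one prefers hardness independent of feasibility detection, we can choose $\mathrm{Quality}$ as an oracle, e.g.\ a constant model set. The argument then still reduces to exhibiting a feasible assignment, which is the coloring problem.
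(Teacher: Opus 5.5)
Your proof is correct and follows the paper's route: the paper reduces from \textsc{List-Coloring} with one column per vertex, $E=E_G$, lists $\{\sigma(\ell):\ell\in L(v_i)\}$ under an injective encoding $\sigma$, $\mathcal{C}^{\dagger}=\emptyset$, and the same feasibility argument for the optimization version; your $k$-coloring source is just the uniform-list special case, which as a side benefit gives hardness already for identical lists of size $3$. Your realizability paragraph goes beyond the paper: it says the columns sit ``under a single shared scope,'' which under Definition~\ref{def:refinement} would force a complete conflict relation (where the problem collapses to bipartite matching), so its argument implicitly relies on $E$ being free input, whereas your single-column tables with one foreign key per edge induce exactly $E_G$ via \eqref{eq:scope}.
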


\begin{IEEEproof}
We reduce from \textsc{List-Coloring}, NP-complete for general
graphs~\cite{jansen1997complexity}.
A \textsc{List-Coloring} instance is a graph $G = (V, E_G)$ with
color lists $L(v) \subseteq \mathbb{N}$, asking whether a proper
coloring $\phi$ exists with $\phi(v) \in L(v)$ and
$\phi(u) \neq \phi(v)$ for all $(u, v) \in E_G$.

Given $(G, L)$, construct a CRD instance: for each $v_i \in V$,
introduce a column $c_i$ under a single shared scope; set
$E = E_G$ and $\mathcal{L}(c_i) = \{\sigma(\ell) : \ell \in L(v_i)\}$
for an injective encoding $\sigma : \mathbb{N} \to \Sigma^{+}$;
set $\mathcal{C}^{\dagger} = \emptyset$.
The construction is polynomial.
A valid coloring $\phi$ yields a valid CRD assignment
$r(c_i) = \sigma(\phi(v_i))$ (condition (iii) is vacuous since
$\mathcal{C}^{\dagger} = \emptyset$); conversely, a valid CRD
assignment $r$ yields a valid coloring
$\phi(v_i) = \sigma^{-1}(r(c_i))$ via injectivity of $\sigma$.
The optimization version of Problem~\ref{prob:main} contains CRD
as a feasibility sub-problem, so it is at least as hard.
\end{IEEEproof}

\begin{remark}[Scope of the Hardness Result]
\label{rem:hardness_scope}
Theorem~\ref{thm:nphard} captures hardness from the combinatorial
structure of constrained candidate assignment, not from properties
of $\mathrm{Quality}$.
Inapproximability results exploiting structure of $\mathrm{Quality}$
itself remain open.
\end{remark}

\subsection{Decomposition and Structural Properties}
\label{ssec:decomposition}

Since exact optimization is intractable
(Observation~\ref{prop:space}, Theorem~\ref{thm:nphard}), we
decompose the joint problem into independent per-element
subproblems, analogous to coordinate descent.

\subsubsection{Element-Wise Refinement}

For each column $c_i \in \mathcal{C}$, let
$Q(c_i) = \{(\mathit{nl}, \mathit{sql}^*) \in Q
          \mid c_i \text{ appears in } \mathit{sql}^*\}$
denote the queries referencing $c_i$.
We solve the following subproblem independently for each $c_i$
identified as a refinement candidate (\S\ref{sec:method}):
\begin{equation}
  c_i^* = \arg\max_{c' \,\in\, \mathrm{Cand}(c_i) \,\cup\, \{c_i^0\}}
           \mathrm{Quality}\bigl(\mathcal{S}[c_i \!\to\! c'],\,
                                 \mathcal{M},\, Q(c_i)\bigr),
  \label{eq:elementwise}
\end{equation}
where $c_i^0 = \mathrm{name}(c_i)$ is the original name,
$\mathrm{Cand}(c_i)$ is the set of $k$ candidates from the
proposal module, and $\mathcal{S}[c_i \!\to\! c']$ denotes
$\mathcal{S}$ with the substitution
$\mathrm{name}(c_i) \gets c'$.

\subsubsection{Conservative Selection Rule}

To prevent regressions, we adopt a conservative policy: a column
is renamed only if the best candidate strictly improves on the
original.
Let
\begin{multline}
\Delta_i = \max_{c' \in \mathrm{Cand}(c_i)} 
\mathrm{Quality}(\mathcal{S}[c_i \!\to\! c'], \mathcal{M}, Q(c_i)) \\
- \mathrm{Quality}(\mathcal{S}, \mathcal{M}, Q(c_i))
\label{eq:delta_i}
\end{multline}
denote the quality gain of the best candidate over the original.
The conservative selection is:
\begin{equation}
  c_i^* =
  \begin{cases}
    \arg\max_{c'} \mathrm{Quality}(\cdot),
    & \text{if } \Delta_i > 0; \\
    c_i^0, & \text{otherwise.}
  \end{cases}
  \label{eq:conservative}
\end{equation}

This rule yields a local guarantee at the per-column level.

\begin{proposition}[Per-Column Local Non-Degradation]
\label{prop:mono}
Consider the conservative selection rule~\eqref{eq:conservative}
applied to a column $c_i$, holding all other columns fixed.
Let $\mathcal{S}_{c_i \to c_i^0}$ denote the schema with $c_i$
keeping its original name, and $\mathcal{S}_{c_i \to c_i^*}$
the schema after applying the rule's selected name.
Then on the column-local query subset $Q(c_i)$:
\begin{equation}
  \mathrm{Quality}(\mathcal{S}_{c_i \to c_i^*},\mathcal{M},Q(c_i))
  \;\geq\;
  \mathrm{Quality}(\mathcal{S}_{c_i \to c_i^0},\mathcal{M},Q(c_i)).
  \label{eq:mono}
\end{equation}
\end{proposition}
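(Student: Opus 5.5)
The argument is a direct case analysis on the conservative selection rule~\eqref{eq:conservative}. Throughout, all columns other than $c_i$ keep their current names, and the query subset $Q(c_i)$ and model set $\mathcal{M}$ are fixed. To lighten notation, write
\[
f(c') := \mathrm{Quality}(\mathcal{S}[c_i \!\to\! c'],\mathcal{M},Q(c_i)).
\]
With this, $\mathcal{S}_{c_i\to c_i^0}=\mathcal{S}[c_i\!\to\! c_i^0]=\mathcal{S}$, so the right-hand side of~\eqref{eq:mono} is $f(c_i^0)$. Likewise the left-hand side is $f(c_i^*)$. By~\eqref{eq:delta_i}, the gain satisfies $\Delta_i=\max_{c'\in\mathrm{Cand}(c_i)} f(c')-f(c_i^0)$. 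This maximum is well defined because $\mathrm{Cand}(c_i)$ is a finite set of $k$ candidates.

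\emph{Case $\Delta_i>0$.} The rule selects $c_i^*\in\arg\max_{c'\in\mathrm{Cand}(c_i)} f(c')$. Hence $f(c_i^*)=f(c_i^0)+\Delta_i>f(c_i^0)$, which gives~\eqref{eq:mono} with strict inequality. Ties inside the $\arg\max$ do not matter, since every maximizer attains the same value $f(c_i^*)$.

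\emph{Case $\Delta_i\le 0$.} The rule returns $c_i^*=c_i^0$. The two schemas in~\eqref{eq:mono} then coincide, and equality holds.

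The one point needing care is to make sure $f$ is a deterministic function of the schema. Otherwise the quantity that decided the selection might not be the one that gets evaluated in~\eqref{eq:mono}. I would therefore state explicitly that $\mathrm{ExAcc}$ in~\eqref{eq:exacc} is a fixed number once $M$, the schema and $Q$ are fixed; in practice this means decoding is treated as deterministic, or the same recorded model outputs are reused. Under that convention, the selection values and the evaluated values are literally the same numbers, and the case split closes the proof.

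I would also note, without proving anything further, that the guarantee is confined to $Q(c_i)$ with the other columns held fixed. It does not cover cross-column interactions or the full workload $Q$.
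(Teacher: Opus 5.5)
Your case analysis on the sign of $\Delta_i$ is correct and is essentially the paper's argument: when $\Delta_i>0$ the adopted name scores $f(c_i^0)+\Delta_i$, and otherwise the original name is kept, so both sides of the inequality coincide. Your explicit requirement that $\mathrm{ExAcc}$ be a deterministic function of the schema (or that the recorded model outputs be reused) is an assumption the paper's proof leaves implicit; stating it tightens the argument rather than departing from it.
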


\begin{IEEEproof}
By the rule~\eqref{eq:conservative}, the candidate $c_i^*$ is
adopted only if $\Delta_i > 0$, where $\Delta_i$ is defined
as the change in $\mathrm{Quality}$ on $Q(c_i)$ between
$\mathcal{S}_{c_i \to c_i^*}$ and $\mathcal{S}_{c_i \to c_i^0}$.
Otherwise, the original name is retained and the two schemas are
identical on $Q(c_i)$.
Either way, inequality~\eqref{eq:mono} holds.
\end{IEEEproof}

\begin{remark}[Scope of the Guarantee]
\label{rem:mono_scope}
Proposition~\ref{prop:mono} is a \emph{per-column local}
property: it constrains only the change on $Q(c_i)$ when column
$c_i$ alone is renamed.
It does not extend to a global non-degradation guarantee on the
full workload, for two reasons.
First, LLM-based Text-to-SQL systems consume the entire schema
as prompt context, so renaming one column can affect behavior on
queries that do not reference $c_i$---through schema-linking
attention, candidate-table pruning, or query-decomposition
strategy---effects not bounded by the column-local $\Delta_i$.
Second, simultaneously applying renamings to multiple columns
can produce schema-level interactions that single-column
verification does not detect, since Phase~3 evaluates each
candidate in isolation.
We therefore characterize \textsc{EGRefine} as
\emph{empirically regression-resistant} rather than provably
non-degrading: the conservative rule substantially reduces the
risk of harmful renamings (15/18 configurations show net
positive $\Delta$ in \S\ref{ssec:flip_analysis}) but does not
eliminate it under cross-column or prompt-level interactions.
\end{remark}

\subsubsection{Global Conflict Resolution}
\label{ssec:conflict}

After all per-element selections are made, we perform a global
consistency check.
If two columns $c_i, c_j$ within the same scope have been assigned
identical refined names ($c_i^* = c_j^*$), a \emph{naming collision}
arises, violating the admissibility constraint.
We resolve such collisions by retaining the refinement with the higher
$\Delta$ and reverting the other to its next-best non-conflicting candidate.
This process repeats for at most two iterations, after which any
remaining conflicts are resolved by retaining the original names.

\subsubsection{Query Equivalence via VIEW Synthesis}

The refined schema $r(\mathcal{S})$ is materialized as a set of
SQL \texttt{CREATE VIEW} statements aliasing original column
names to their refined counterparts:
\begin{multline}
  \texttt{CREATE VIEW } \hat{t} \texttt{ AS SELECT} \\
  c_1 \texttt{ AS } r(c_1),\; \dots,\;
  c_n \texttt{ AS } r(c_n) \texttt{ FROM } t,
  \label{eq:view}
\end{multline}
for each table $t$ with columns $c_1, \dots, c_n$.
Downstream models generate SQL referencing the refined names;
the database engine resolves these to underlying columns during
query planning, with no intermediate string rewriting.

\begin{proposition}[Query Equivalence under DBMS View Semantics]
\label{thm:equiv}
Let $r$ be an admissible refinement mapping, and let
$V = \mathrm{VIEW}(r, \mathcal{S})$ be the corresponding view
definitions.
For any read-only SQL query $q'$ over $r(\mathcal{S})$
consisting of relational-algebra core operators (projection,
selection, equi-join, set operations) over the views, there
exists a semantically equivalent query $q$ over $\mathcal{S}$
such that:
\begin{equation}
  \mathrm{exec}(q,\, \mathcal{S}) = \mathrm{exec}(q',\, r(\mathcal{S})).
  \label{eq:equiv}
\end{equation}
For SQL features beyond the core algebra (aggregation,
\texttt{ORDER BY}, \texttt{LIMIT}, \texttt{DISTINCT}, scalar
functions, quoted identifiers, \texttt{NATURAL JOIN}),
equivalence is provided by DBMS view-expansion
semantics~\cite{garcia2008database}.
\end{proposition}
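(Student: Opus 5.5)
The plan is a syntactic substitution argument, by structural induction on the relational-algebra expression $q'$. Each view $\hat t$ defined in~\eqref{eq:view} is a pure renaming of its base table $t$: it projects every column of $t$, performs no selection, and introduces no duplicates. In relational algebra this is $\hat t = \rho_{r}(t)$. The rename operator $\rho_r$ maps attribute $c_k$ to $r(c_k)$. Because $r$ is admissible, no two columns in $\mathrm{scope}(c)$ share a refined name (condition (i) of Definition~\ref{def:refinement}). In particular, the map $c_k \mapsto r(c_k)$ is injective on the columns of each table, so $\rho_r$ is a bijection on attribute names. Condition (ii) ensures that the data types and tuples are unchanged. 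It follows that the instance of $\hat t$ equals the instance of $t$ with its attributes relabelled, and the inverse renaming $r^{-1}$ is well defined per table.

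\textbf{Step 1 (translation).} Define $q = \tau(q')$ by replacing each view reference $\hat t$ with $t$ and each attribute reference $r(c)$ with $c$, resolving the inverse table-locally through the qualifying table or alias. Then check by induction that $\mathrm{exec}(\tau(e),\mathcal{S})$ equals $\mathrm{exec}(e, r(\mathcal{S}))$ up to the renaming of output attributes, for every subexpression $e$.

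\textbf{Step 2 (inductive cases).}
\begin{itemize}
\item \emph{Base case.} This is immediate from $\hat t = \rho_r(t)$.
\item \emph{Projection $\pi_A$.} Renaming commutes with projection: $\pi_{r(A)}\rho_r(R) = \rho_r \pi_A(R)$.
\item \emph{Selection $\sigma_\theta$.} A predicate evaluated on renamed attributes agrees tuple-by-tuple with $r^{-1}(\theta)$ on the originals, since values are untouched.
\item \emph{Equi-join.} This follows as selection over a product. The delicate point is name clashes across the two operands. I handle it by assuming attributes are qualified by table or alias, so that $r^{-1}$ is applied per qualified attribute.
\item \emph{Set operations.} These are positional on union-compatible relations, so bijective renaming preserves them. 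The output's attribute labels follow the left operand, which $\tau$ handles consistently.
\end{itemize}

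\textbf{Step 3 (output labels).} To obtain literal equality in~\eqref{eq:equiv}, including column labels, I take $q = \rho_{r}(\tau(q'))$ on the output schema, i.e.\ aliasing the outputs back with \texttt{AS}. Alternatively, $\mathrm{exec}$ can be read as comparing result multisets of tuples, as in~\eqref{eq:exacc}.

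\textbf{Step 4 (features beyond the core algebra).} For aggregation, \texttt{ORDER BY}, \texttt{LIMIT}, \texttt{DISTINCT}, and the other listed features, I do not prove equivalence. I appeal to the standard fact that the DBMS expands a non-materialized view by inlining its defining query~\cite{garcia2008database}. A query over $\hat t$ is therefore literally evaluated as the same query over the subquery $\texttt{SELECT } c_k \texttt{ AS } r(c_k) \ldots \texttt{ FROM } t$. Each of these operators depends only on values and positions, which that subquery preserves. I would note two exceptions. \texttt{NATURAL JOIN} depends on names, so equivalence holds only relative to joining the renamed relations. Quoted identifiers are the other exception; both are why the statement defers them to DBMS semantics.

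I expect the main obstacle to be the join and name-resolution step. Admissibility guarantees injectivity only within a scope. Cross-table homonyms without a foreign-key relationship may therefore collide after renaming, and unqualified references in $q'$ could become ambiguous. I would first try restricting to fully qualified queries, noting that a DBMS rejects an ambiguous unqualified reference rather than silently changing semantics. Under that restriction, the per-table injectivity supplied by Definition~\ref{def:refinement} suffices.
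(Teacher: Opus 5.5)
Your proposal is correct and follows the same route as the paper: identify each view with the rename $\hat t = \rho_r(t)$, use admissibility to make $r^{-1}$ well defined locally, push the inverse renaming through the core operators, and delegate everything outside the core to DBMS view-expansion semantics. Where the paper simply cites commutativity of $\rho$ with selection, projection, equi-join and set operations, you carry out the structural induction explicitly, and your extra care with output labels and with qualifying attribute references fills in details the paper leaves implicit (the restriction to fully qualified $q'$ loses no generality, since any well-formed $q'$ resolves each unqualified reference uniquely and can be rewritten in qualified form).
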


\begin{IEEEproof}
For the relational-algebra core, view
definition~\eqref{eq:view} implements a rename operation
$\rho$: $\hat{t} = \rho_{r}(t)$.
By the commutativity of $\rho$ with selection, projection,
equi-join, and set operations~\cite{abiteboul1995foundations},
and admissibility of $r$ (so $r^{-1}$ is well-defined within
each scope), any expression $e'$ over $r(\mathcal{S})$ admits
an equivalent rewrite $e = e'[r^{-1}]$ over $\mathcal{S}$,
establishing~\eqref{eq:equiv}.
For SQL constructs outside this core, equivalence is delegated
to DBMS view-expansion semantics: a query against a view is
evaluated as if the view definition were textually
substituted~\cite{garcia2008database}.
We verified empirically across all benchmarks (Dr.Spider, BIRD,
BEAVER) that no execution discrepancy arises between gold SQL
run against $\mathcal{S}$ and its $r$-aliased counterpart run
against $V$.
\end{IEEEproof}

\medskip

Together, Observation~\ref{prop:space},
Theorem~\ref{thm:nphard}, Proposition~\ref{prop:mono} (with
Remark~\ref{rem:mono_scope}), and Proposition~\ref{thm:equiv}
provide the formal underpinning for our method: the problem is
intractable in general, but the greedy decomposition is locally
non-degrading at the column level and the output is provably
non-destructive at the database level.
Section~\ref{sec:method} presents \textsc{EGRefine}, a concrete
instantiation of this framework.

\begin{figure*}[t]
\centering
\includegraphics[width=0.96\textwidth]{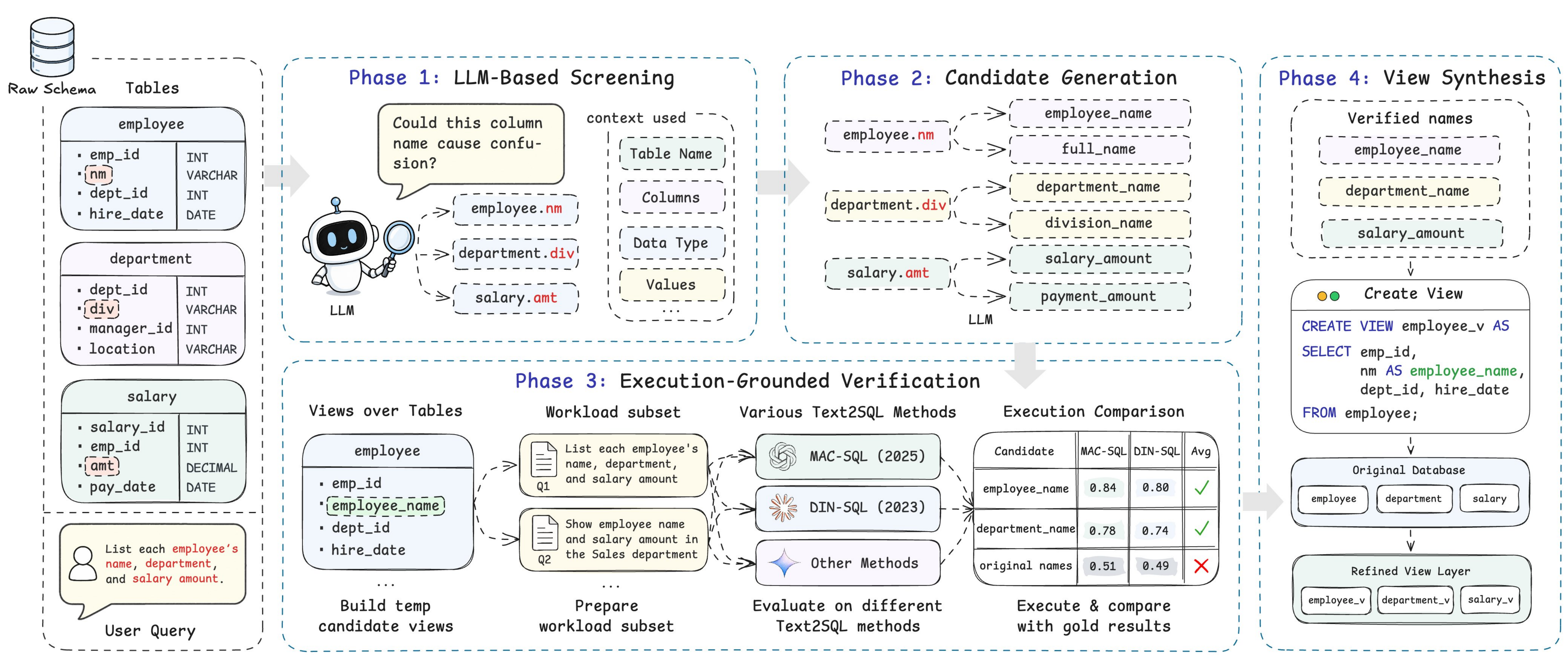}
\caption{Overview of the \textsc{EGRefine} pipeline.
Given a raw schema $\mathcal{S}$ (left, with example tables
\texttt{employee}, \texttt{department}, \texttt{salary}),
the four phases progressively refine column names and output
a refined schema $\mathcal{S}'$ as a non-destructive view layer
(right).
\textbf{Phase~1} (LLM-Based Screening, \S\ref{ssec:phase1})
selects $n \ll m$ candidate columns whose names may impede
Text-to-SQL interpretation, using full-schema LLM judgment
rather than surface-level rules.
\textbf{Phase~2} (Candidate Generation, \S\ref{ssec:phase2})
proposes $k$ context-aware renaming candidates per selected
column, conditioned on schema context, neighbor column values,
and conservative renaming guidelines.
\textbf{Phase~3} (Execution-Grounded Verification,
\S\ref{ssec:phase3}) builds a temporary view per candidate,
prepares a workload subset, runs multiple Text-to-SQL methods
(e.g., MAC-SQL, DIN-SQL), and compares execution against gold
results to select $c^* = \arg\max_{c_j} \Delta(c_j)$;
the conservative rule keeps the original name when
$\Delta < \tau_{\min}$.
\textbf{Phase~4} (VIEW Synthesis, \S\ref{ssec:phase4})
materializes accepted refinements as SQL views, with query
equivalence delegated to engine-level view expansion
(\S\ref{ssec:decomposition}).
The original database remains untouched.
\textit{By design, the user query workload (bottom-left) is
consumed only by Phase~3; Phases 1, 2, and 4 are schema-only,
so candidate generation does not overfit to current workload
phrasing and supervision is concentrated at a single,
replaceable bottleneck.}}
\label{fig:pipeline}
\end{figure*}


\section{Method: \textsc{EGRefine}}
\label{sec:method}

\textsc{EGRefine} is a four-phase pipeline that approximately
solves Problem~\ref{prob:main} via the column-wise decomposition
of \S\ref{sec:formulation}.
Algorithm~\ref{alg:pipeline} provides a high-level overview.

\noindent\textbf{Deployment setting.}\;
\textsc{EGRefine} is an offline preprocessing step: it produces
a refined schema once and amortizes the cost over future queries.
The input includes a representative query workload $Q$ with
ground-truth SQL---reflecting deployments where business
questions and expert-authored SQL characterize how the schema
is used.
We use benchmark dev sets as proxies; settings without
ground-truth SQL are outside scope (\S\ref{sec:conclusion}).

\begin{algorithm}[t]
\caption{\textsc{EGRefine} Pipeline}
\label{alg:pipeline}
\begin{algorithmic}[1]
\REQUIRE Schema $\mathcal{S}$, models $\mathcal{M}$,
  queries $Q$, data samples $D$
\ENSURE Refined schema $\mathcal{S}^*$ (as SQL VIEWs)
\STATE $A \gets \textsc{Phase1-Screen}(\mathcal{S}, \mathcal{M}_{\mathrm{screen}})$
  \hfill $\triangleright$ LLM screening
\STATE $A \gets A \setminus \textsc{StructuralExclude}(\mathcal{S})$
  \hfill $\triangleright$ Remove FK / join keys
\FOR{each column $c_i \in A$}
  \STATE $\mathrm{Cand}(c_i) \gets \textsc{Phase2-Generate}(c_i, \mathcal{S}, D)$
  \STATE $c_i^*, \Delta_i \gets \textsc{Phase3-Verify}(c_i, \mathrm{Cand}(c_i), Q, \mathcal{M})$
\ENDFOR
\STATE $R \gets \{(c_i, c_i^*) \mid \Delta_i \geq \tau_{\min}\}$
\STATE $R \gets R \cup \textsc{PropagatePK}(R, \mathcal{S})$
  \hfill $\triangleright$ PK$\to$FK
\STATE $\mathcal{S}^* \gets \textsc{Phase4-Synthesize}(R, \mathcal{S})$
  \hfill $\triangleright$ CREATE VIEW
\RETURN $\mathcal{S}^*$
\end{algorithmic}
\end{algorithm}

\subsection{Phase 1: LLM-Based Schema Screening}
\label{ssec:phase1}

The refinement search space is exponential
(Observation~\ref{prop:space}), so Phase~1 selects a subset
$A \subseteq \mathcal{C}$ of refinement-improvable columns
($|A| = n \ll m$).
We use a lightweight LLM (Qwen3.5-27B) as a schema quality
assessor, prompting it with the binary question \emph{``Could
this column name cause confusion for a Text-to-SQL model?''} and
five contextual signals: database domain description, table name
and column data type, neighboring column names, 5 sample rows,
and explicit criteria covering abbreviations, domain-specific
polysemy, single-letter codes, and generic vocabulary.
LLM screening captures semantic ambiguity beyond surface-level
rules---e.g., \texttt{label} passes lexical checks yet denotes a
carcinogenicity indicator in toxicology.

\noindent\textbf{Structural Exclusion.}\;
A structural filter preserves join integrity: primary keys are
allowed (with refined names propagated to FKs in Phase~4);
foreign keys are excluded (driven by the corresponding PK);
cross-table homonyms sharing both name and data type are excluded
as likely implicit join keys.
This design favors recall---false positives incur no harm since
Phase~3's conservative rule retains their original names.

\subsection{Phase 2: Context-Aware Candidate Generation}
\label{ssec:phase2}

For each candidate column $c_i \in A$, we prompt an LLM to
generate $k$ alternative names (default $k{=}3$) using four
context elements: a one-sentence database domain description;
neighboring columns with names, types, and 5 sample values;
20 sample values from the target column; and conservative
renaming guidelines (include the original name if already
clear, avoid over-specification, interpret data values in light
of the domain).
The LLM returns a ranked JSON list, forming the augmented set
$\mathrm{Cand}^+(c_i) = \{c_i^0\} \cup \mathrm{Cand}(c_i)$.

\subsection{Phase 3: Execution-Grounded Verification}
\label{ssec:phase3}

Instead of relying on the LLM's subjective ranking, Phase~3
evaluates each candidate by its downstream effect on Text-to-SQL
execution accuracy. This is conceptually related to
self-consistency~\cite{wang2023selfconsistency}, which selects
the most consistent answer across multiple sampled outputs;
we replace the abstract notion of consistency with the concrete
signal of execution accuracy on a verifier query subset, and
apply it at the schema-element rather than per-query
granularity.

\subsubsection{Query Subset Selection}

For each candidate column $c_i$, we extract the relevant query
subset $Q(c_i) = \{(nl,\, sql^*) \in Q \mid c_i \in \mathrm{cols}(sql^*)\}$.
If $Q(c_i) = \emptyset$, execution-based verification is
impossible and we retain the original name.

\subsubsection{Per-Candidate Scoring}

For each candidate $c_j' \in \mathrm{Cand}^+(c_i)$, we
(1) construct a temporary schema $\mathcal{S}[c_i \to c_j']$ via
\texttt{CREATE VIEW}; (2) for each $nl \in Q(c_i)$, invoke the
Text-to-SQL model on the modified schema; (3) execute predicted
SQL against the views and compare with gold SQL on $\mathcal{S}$
(equivalence by Proposition~\ref{thm:equiv}).
The resulting execution accuracy is:
\begin{equation}
  \mathrm{Score}(c_j') = \mathrm{ExAcc}(M,\,
    \mathcal{S}[c_i \to c_j'],\, Q(c_i)).
  \label{eq:score}
\end{equation}

\subsubsection{Multi-Algorithm Aggregation}

To reduce dependence on any single Text-to-SQL system, we
aggregate scores:
\begin{equation}
  \overline{\mathrm{Score}}(c_j') = \frac{1}{|\mathcal{M}|}
    \sum_{M \in \mathcal{M}} \mathrm{Score}_M(c_j').
\end{equation}
We use $\mathcal{M} = \{\text{C3}, \text{DIN-SQL}\}$, holding out
MAC-SQL to test cross-algorithm transfer (\S\ref{ssec:rq3}).
$|\mathcal{M}|{=}2$ is empirically motivated
(\S\ref{ssec:rq2}): $M{=}1$ overfits and degrades other
algorithms, while $M{=}2$ forces consensus that generalizes.

\subsubsection{Conservative Selection}

The best candidate is selected as:
\begin{equation}
  c_i^* = \arg\max_{c_j' \in \mathrm{Cand}^+(c_i)}
    \overline{\mathrm{Score}}(c_j'),
  \label{eq:selection}
\end{equation}
subject to:
\begin{equation}
  \Delta_i = \overline{\mathrm{Score}}(c_i^*) -
    \overline{\mathrm{Score}}(c_i^0) \geq \tau_{\min},
  \label{eq:conservative_tau}
\end{equation}
where $\tau_{\min}$ is a minimum improvement threshold; if
$\Delta_i < \tau_{\min}$, the original name is retained.
This filters marginal noise and instantiates
Proposition~\ref{prop:mono}; Algorithm~\ref{alg:verify} details
the procedure.

\begin{algorithm}[t]
\caption{Execution-Grounded Verification (Phase 3)}
\label{alg:verify}
\begin{algorithmic}[1]
\REQUIRE Column $c_i$, candidates $\mathrm{Cand}^+(c_i)$,
  queries $Q(c_i)$, models $\mathcal{M}$, data samples $D_s$
\ENSURE Selected name $c_i^*$, improvement $\Delta_i$
\IF{$Q(c_i) = \emptyset$}
  \RETURN $c_i^0$, $\mathit{skipped}$
    \hfill $\triangleright$ retain original: no queries to verify
\ENDIF
\FOR{each $c_j' \in \mathrm{Cand}^+(c_i)$}
  \STATE $\overline{\mathrm{Score}}(c_j') \gets 0$
  \FOR{each $M \in \mathcal{M}$}
    \FOR{each $(nl, sql^*) \in Q(c_i)$}
      \STATE $\hat{sql} \gets M(nl, \mathcal{S}[c_i \to c_j'], D_s)$
      \STATE Execute $\hat{sql}$ on view $\mathcal{S}[c_i \to c_j']$;
        compare with $\mathrm{exec}(sql^*)$ on $\mathcal{S}$
    \ENDFOR
    \STATE $\overline{\mathrm{Score}}(c_j') \mathrel{+}= \mathrm{ExAcc} / |\mathcal{M}|$
  \ENDFOR
\ENDFOR
\STATE $c_i^* \gets \arg\max_{c_j'} \overline{\mathrm{Score}}(c_j')$
\STATE $\Delta_i \gets \overline{\mathrm{Score}}(c_i^*) - \overline{\mathrm{Score}}(c_i^0)$
\IF{$\Delta_i < \tau_{\min}$}
  \RETURN $c_i^0$, $0$
    \hfill $\triangleright$ conservative: keep original
\ENDIF
\RETURN $c_i^*$, $\Delta_i$
\end{algorithmic}
\end{algorithm}

\subsubsection{Computational Cost}
\label{ssec:phase3_cost}

Phase~3 dominates the offline cost:
$O\bigl(|A| \cdot k \cdot |\mathcal{M}| \cdot \overline{|Q(c)|}\bigr)$
Text-to-SQL inferences.
On Dr.Spider-Abbr (Qwen3.5-27B) with $|A|{=}665$, $k{=}3$,
$|\mathcal{M}|{=}2$, $\overline{|Q(c)|}{\approx}10$, this yields
$\approx 4 \times 10^4$ inferences ($\approx$4--6\,h on one
A800 at 16-parallel concurrency); on BIRD,
$\approx 9 \times 10^3$ inferences ($\approx$1--2\,h).
This one-time cost amortizes across future queries and multiple
downstream models (\S\ref{ssec:rq3}).
For very large schemas, Phase~1 reduces $|A|/|\mathcal{C}|$ to
under 35\% (Table~\ref{tab:funnel}).

\subsection{Phase 4: Non-Destructive Schema Synthesis}
\label{ssec:phase4}

The final phase materializes the refinement as SQL
\texttt{CREATE VIEW} statements (Eq.~\ref{eq:view}), implementing
the non-destructive layer guaranteed by
Proposition~\ref{thm:equiv}.
Original tables and data are preserved verbatim; the refined
schema is exposed as an additional view layer.
The downstream system queries the views directly; the database
engine evaluates them via relational-algebra rewriting, so no
data is duplicated and original tables remain queryable.

\noindent\textbf{Engineering scope.}\;
\texttt{CREATE VIEW} targets read-only Text-to-SQL workloads
(BI queries, dashboards, ad-hoc analytics).
Production deployments may require DBMS-specific adjustments
(reserved-keyword quoting, dialect casing, view updateability,
optimizer interaction with view expansion, view-layer access
control).
Our experiments use SQLite; replication on PostgreSQL/MySQL is
straightforward in principle but empirically unverified.

\subsubsection{PK$\to$FK Name Propagation}

When a PK column is renamed, all FKs referencing it must follow
to maintain join consistency.
For each renamed PK $(T_p, c_p) \to c_p'$, we apply the same
rename to every FK column referencing $(T_p, c_p)$.
This propagation is deterministic and requires no LLM calls or
execution verification.
Columns with $Q(c_i) = \emptyset$ retain their original names.
The complete set of view definitions constitutes the refined
schema $\mathcal{S}^*$.


\section{Experiments}
\label{sec:experiments}

We organize our evaluation around four research questions:
\textbf{RQ1}: Can \textsc{EGRefine} recover performance lost
  to schema naming degradation?
\textbf{RQ2}: How much does each pipeline component contribute,
  and how sensitive is the method to its hyperparameters?
\textbf{RQ3}: Does refinement transfer across model families
  (\emph{refine-once, serve-many-models})?
\textbf{RQ4}: Is \textsc{EGRefine} effective on real-world schemas?
Four supporting analyses follow: complementarity with query-time
domain knowledge (\S\ref{ssec:rq5}), workload-holdout
generalization (\S\ref{ssec:rq_holdout}), a controlled comparison
against description annotation (\S\ref{ssec:rq_descbaseline}),
and a screening-funnel breakdown (\S\ref{ssec:rq6}).
Section~\ref{sec:analysis} then dissects the mechanism behind
these results.

\subsection{Experimental Setup}
\label{ssec:setup}

\subsubsection{Benchmarks}

\textbf{Dr.Spider}~\cite{chang2023drspider} serves as our primary
benchmark with two schema perturbation subsets:
\emph{Schema-Abbreviation} (Dr.Spider-Abbr; 691 perturbed columns
across 90 databases) systematically replaces column names with
abbreviations (e.g., \texttt{country} $\to$ \texttt{cntry}), and
\emph{Column-Synonym} replaces them with semantic equivalents.
Built on Spider~\cite{yu2018spider} (1{,}034 dev queries
expanded to 2{,}853 by perturbation),
Dr.Spider provides a controlled setting to measure
\emph{recovery capability}: how much of the performance lost
to schema degradation can be restored.
Unless otherwise noted, ``Dr.Spider'' refers to the
Schema-Abbreviation subset.

\textbf{BIRD}~\cite{li2023bird} (1{,}534 queries, 11 databases)
validates \textsc{EGRefine} on naturally occurring schemas
with domain-specific abbreviations and irregular naming.
We deliberately \emph{exclude all hints} to simulate realistic
deployment conditions where human-curated annotations
are unavailable.

\textbf{BEAVER}~\cite{chen2024beaver} (88 queries, 4 databases,
1{,}439 columns) provides enterprise-scale validation with complex
schemas from real data warehouses.
The three play complementary roles: Dr.Spider provides large-effect,
high-resolution evidence that the mechanism works; BIRD probes
transfer to natural schemas (small coverage, modest aggregate effect);
BEAVER delineates the applicability boundary.

\subsubsection{Text-to-SQL Systems}
We evaluate three representative algorithms covering distinct
prompting paradigms: C3~\cite{dong2023c3} (zero-shot),
DIN-SQL~\cite{pourreza2023dinsql} (decomposed prompting),
and a MAC-SQL-style~\cite{wang2025macsql} multi-agent system
(Selector $\to$ Decomposer $\to$ Refiner).
We use a unified zero-shot prompting setup across all three
for evaluation consistency; reported numbers reflect baselines
under this harness, not original published numbers.
The default backbone is Qwen3.5-27B; cross-model analysis (RQ3)
additionally evaluates Qwen3.5-9B, Gemma3-27B, and Phi-4-14B.

\subsubsection{Refinement Baselines}
\textbf{No Refinement} (NoRef): the (possibly degraded) schema
is used as-is.
\textbf{LLM-Direct}: Phases 1--2 of EGRefine, then select the
LLM's top-ranked candidate without execution verification
(isolates Phase 3's contribution).
\textbf{Description Annotation}: LLM-generated column descriptions
are injected as schema-prompt comments while identifiers remain
unchanged (used in \S\ref{ssec:rq_descbaseline} only).

\subsubsection{Metrics}
We report Execution Accuracy (ExAcc).
For Dr.Spider, we additionally report \emph{Recovery Rate}:
\begin{equation}
\mathrm{RecRate} = \frac{\mathrm{ExAcc}_{\mathrm{refined}} - \mathrm{ExAcc}_{\mathrm{degraded}}}
{\mathrm{ExAcc}_{\mathrm{clean}} - \mathrm{ExAcc}_{\mathrm{degraded}}} \times 100\%
\label{eq:recovery}
\end{equation}
where $\mathrm{ExAcc}_{\mathrm{clean}}$ is accuracy on the
unperturbed Spider schema; values above 100\% indicate the
refined schema exceeds the original.

\subsubsection{Implementation}
All EGRefine phases use Qwen3.5-27B deployed locally via vLLM
(zero API cost), with $k{=}3$ candidates per column,
$N_s{=}20$ sampled rows, $\tau_{\min}{=}0.05$, and verifier set
$\mathcal{M} = \{\text{C3, DIN-SQL}\}$.
The refinement backbone never acts as its own downstream
evaluator: downstream systems treat the refined schema as opaque
input.
Reported gains on C3 and DIN-SQL (the verifier set) are
\emph{in-loop} measurements; gains on \emph{MAC-SQL} are
\emph{out-of-loop}, reflecting refinements selected without
reference to its behavior.

\subsection{RQ1: Recovery from Schema Degradation}
\label{ssec:rq1}

Table~\ref{tab:main_drspider} presents the main results on
Dr.Spider Schema-Abbreviation.
Schema abbreviation degrades ExAcc by 1.9--4.2 percentage points
(pp); \textsc{EGRefine} recovers this loss across all three systems,
matching or exceeding the clean baseline on C3 (134.7\% recovery,
indicating refinement also improves over Spider's original names)
and recovering 67.4\% / 15.0\% on DIN-SQL / MAC-SQL respectively.
LLM-Direct, in contrast, \emph{worsens} DIN-SQL ($-$0.88\,pp)
and MAC-SQL ($-$2.42\,pp), confirming that unverified renaming
introduces harmful noise.
On Column-Synonym (C3 only), \textsc{EGRefine} achieves
$+$2.25\,pp recovery (64.87\% $\to$ 67.12\%), generalizing to
non-abbreviation degradations.

\begin{table}[t]
\centering
\caption{Execution Accuracy (\%) on Dr.Spider
Schema-Abbreviation (2{,}853 queries, Qwen3.5-27B).
Recovery Rate per Eq.~\ref{eq:recovery}.}
\label{tab:main_drspider}
\renewcommand{\arraystretch}{1.15}
\begin{tabular}{l*{3}{c}}
\toprule
\textbf{Method} & \textbf{C3} & \textbf{DIN-SQL} & \textbf{MAC-SQL} \\
\midrule
Clean (Original Spider)  & 72.77 & 80.41 & 63.79 \\
Degraded (Abbreviated)   & 70.87 & 76.73 & 59.59 \\
\quad + LLM-Direct       & 72.34 & 75.85 & 57.17 \\
\quad + \textbf{EGRefine} & \textbf{73.43} & \textbf{79.21} & \textbf{60.22} \\
\midrule
$\Delta$ (EGRefine $-$ Degraded) & +2.56 & +2.48 & +0.63 \\
Recovery Rate & 134.7\% & 67.4\% & 15.0\% \\
\bottomrule
\end{tabular}
\end{table}

\subsection{RQ2: Component Ablation and Sensitivity}
\label{ssec:rq2}

Table~\ref{tab:ablation} isolates the contribution of each
component on Dr.Spider Schema-Abbreviation.

\noindent\textbf{w/o Execution} (= LLM-Direct) removes Phase~3
verification and causes the largest degradation across all three
systems---on Dr.Spider-Abbr it leaves DIN-SQL and MAC-SQL
\emph{below} their no-refinement baselines ($-$0.88 and
$-$2.42\,pp), meaning naive LLM-based refinement actively harms
downstream accuracy.
The query-level flip analysis in \S\ref{ssec:flip_analysis}
confirms this pattern.

\noindent\textbf{w/o Conservative Rule} removes the $\tau_{\min}$
threshold, applying all renamings regardless of verification score.
The drop ($-$0.42/$-$0.38/$-$1.30) is much smaller than under
\textbf{w/o Execution} ($-$1.09/$-$3.36/$-$3.05): execution
verification is the dominant mechanism, with abstention as
secondary refinement.

\noindent\textbf{w/o Phase~1 Screening} sends all non-structural
columns to Phase~2.
The second-largest drop ($-$2.35/$-$1.89 on C3/DIN) shows that
indiscriminate candidate generation dilutes Phase~3's signal.

\noindent\textbf{w/o Multi-Algo Verification} ($M{=}1$, C3 only)
exposes verifier overfitting: while C3 itself gains comparably
($+$2.49 vs $+$2.56 with $M{=}2$), the refined schema
\emph{degrades} DIN-SQL ($-$0.39\,pp) and slightly hurts MAC-SQL
($-$0.11\,pp).
This mechanistically explains why held-out cross-algorithm
transfer (\S\ref{ssec:rq3}) works: $M{=}2$ forces consensus
refinements that generalize to unseen algorithms.

\begin{table}[t]
\centering
\caption{Ablation on Dr.Spider Schema-Abbreviation (ExAcc \%).
$M{=}1$ uses C3-only verification; full pipeline uses
$M{=}2$ (C3$+$DIN-SQL).}
\label{tab:ablation}
\renewcommand{\arraystretch}{1.15}
\begin{tabular}{l*{3}{c}}
\toprule
\textbf{Variant} & \textbf{C3} & \textbf{DIN-SQL} & \textbf{MAC-SQL} \\
\midrule
\textbf{EGRefine (full, $M{=}2$)}  & \textbf{73.43} & \textbf{79.21} & \textbf{60.22} \\
\quad w/o Execution       & 72.34 & 75.85 & 57.17 \\
\quad w/o Conservative    & 73.01 & 78.83 & 58.92 \\
\quad w/o Screening       & 71.08 & 77.32 & 58.78 \\
\quad w/o Multi-Algo ($M{=}1$)
                          & 73.36 & 76.34 & 59.48 \\
\midrule
Degraded (no refine)      & 70.87 & 76.73 & 59.59 \\
\bottomrule
\end{tabular}
\end{table}

\noindent\textbf{Sensitivity to the conservative threshold $\tau_{\min}$.}\;
We sweep $\tau_{\min} \in \{0.01, 0.03, 0.05, 0.10\}$ on
Dr.Spider-Abbr (Table~\ref{tab:tau_sensitivity}).
\emph{Too permissive} ($\tau_{\min}{=}0.01$) commits 23\% more
columns and causes net regressions on DIN-SQL/MAC-SQL
($-$0.25/$-$0.81\,pp), validating $\tau_{\min}{>}0$;
\emph{too strict} ($\tau_{\min}{=}0.10$) shrinks gains to
roughly one-third of the optimum.
The chosen $\tau_{\min}{=}0.05$ Pareto-dominates: highest ExAcc
on every algorithm--$\tau$ cell, with worst-case spread of
2.7\,pp across the four values---the pipeline is not finely
tuned to one threshold.
The sweep also reinforces \S\ref{ssec:mac_analysis}: MAC-SQL's
$\Delta$ is monotonically increasing in $\tau_{\min}$ over
$\{0.01, 0.03, 0.05\}$, indicating heightened sensitivity to
low-confidence renamings.

\begin{table}[t]
\centering
\caption{Sensitivity to $\tau_{\min}$ on Dr.Spider Schema-Abbreviation
(Qwen3.5-27B). ``Cols'' is the number of columns committed.
$\Delta$ values are relative to the no-refinement baseline.
Boldface marks the chosen default.}
\label{tab:tau_sensitivity}
\renewcommand{\arraystretch}{1.15}
\setlength{\tabcolsep}{4.5pt}
\begin{tabular}{ccccccccc}
\toprule
$\bm{\tau_{\min}}$ & \textbf{Cols} &
\multicolumn{3}{c}{\textbf{ExAcc (\%)}} &
\multicolumn{3}{c}{$\bm{\Delta}$ \textbf{vs.\ NoRef}} \\
\cmidrule(lr){3-5} \cmidrule(lr){6-8}
   &     & C3    & DIN   & MAC   & C3    & DIN   & MAC \\
\midrule
0.01 & 104 & 71.57 & 76.48 & 58.78 & $+$0.70 & $-$0.25 & $-$0.81 \\
0.03 & 89  & 71.68 & 77.88 & 59.48 & $+$0.81 & $+$1.15 & $-$0.11 \\
\textbf{0.05} & \textbf{81} & \textbf{73.43} & \textbf{79.21} & \textbf{60.22} & $\bm{+}$\textbf{2.56} & $\bm{+}$\textbf{2.48} & $\bm{+}$\textbf{0.63} \\
0.10 & 54  & 71.57 & 77.67 & 60.15 & $+$0.70 & $+$0.94 & $+$0.56 \\
\bottomrule
\end{tabular}
\end{table}

\noindent\textbf{Sensitivity to the verifier-set choice.}\;
We test whether reported gains depend on cherry-picking a
favorable verifier set by re-running the pipeline with an
alternative $\mathcal{M}' = \{\text{DIN-SQL, MAC-SQL}\}$;
C3 now serves as the held-out evaluator
(Table~\ref{tab:cross_verifier}).
\textsc{EGRefine} yields positive gains on \emph{every}
algorithm under both configurations ($\Delta \in [+0.63, +4.47]$).
The held-out role does not determine performance: C3---now
held out---achieves the strongest result ($+$4.47\,pp,
exceeding the clean Spider baseline by 2.57\,pp), ruling out
``held-out exclusion'' as the cause of MAC-SQL's modest gain
in the original configuration.
MAC-SQL's bounded recovery is intrinsic to its agent design:
even when MAC-SQL is itself a verifier, its 85.2\% recovery
matches DIN-SQL's rather than approaching the higher recoveries
observed for less brittle algorithms (cf.\ \S\ref{ssec:mac_analysis}).

\begin{table}[t]
\centering
\caption{Cross-verifier robustness on Dr.Spider Schema-Abbreviation
(Qwen3.5-27B). Original $\mathcal{M} = \{$C3, DIN$\}$;
alternative $\mathcal{M}' = \{$DIN, MAC$\}$.
Held-out role is shaded.}
\label{tab:cross_verifier}
\renewcommand{\arraystretch}{1.15}
\setlength{\tabcolsep}{3pt}
\begin{tabular}{lcccc}
\toprule
\textbf{Verifier set} & \textbf{Algo} & \textbf{ExAcc (\%)}
   & \bm{$\Delta$} & \textbf{Recovery} \\
\midrule
\multirow{3}{*}{$\{$C3, DIN$\}$ (original)}
  & C3       & 73.43 & $+$2.56 & 134.7\% \\
  & DIN-SQL  & 79.21 & $+$2.48 & \phantom{0}67.4\% \\
  & \cellcolor{gray!15}MAC-SQL  & \cellcolor{gray!15}60.22 & \cellcolor{gray!15}$+$0.63 & \cellcolor{gray!15}\phantom{0}15.0\% \\
\midrule
\multirow{3}{*}{$\{$DIN, MAC$\}$ (alternative)}
  & \cellcolor{gray!15}C3       & \cellcolor{gray!15}\textbf{75.34} & \cellcolor{gray!15}$\bm{+}$\textbf{4.47} & \cellcolor{gray!15}\textbf{235.3\%} \\
  & DIN-SQL  & 79.87 & $+$3.14 & \phantom{0}85.2\% \\
  & MAC-SQL  & 63.17 & $+$3.58 & \phantom{0}85.2\% \\
\bottomrule
\end{tabular}
\end{table}

\subsection{RQ3: Cross-Model Transferability}
\label{ssec:rq3}

A practical question is whether a refined schema produced by
one model benefits a different model at serving time, enabling
\emph{refine-once, serve-many-models} deployment.

\subsubsection{Multi-Backbone Evaluation}

Figure~\ref{fig:model_strength} reports results across four
model families on Dr.Spider Schema-Abbreviation, all using
the same 27B-refined schema (except ``self-refine'' rows).
\textsc{EGRefine} delivers positive recovery across all
families, with DIN-SQL recovery exceeding 100\% on Gemma3-27B
(133.7\%) and Phi-4 (188.1\%)---refined schemas can unlock
cross-family gains beyond the original clean baseline.

\begin{figure}[t]
\centering
\includegraphics[width=\columnwidth]{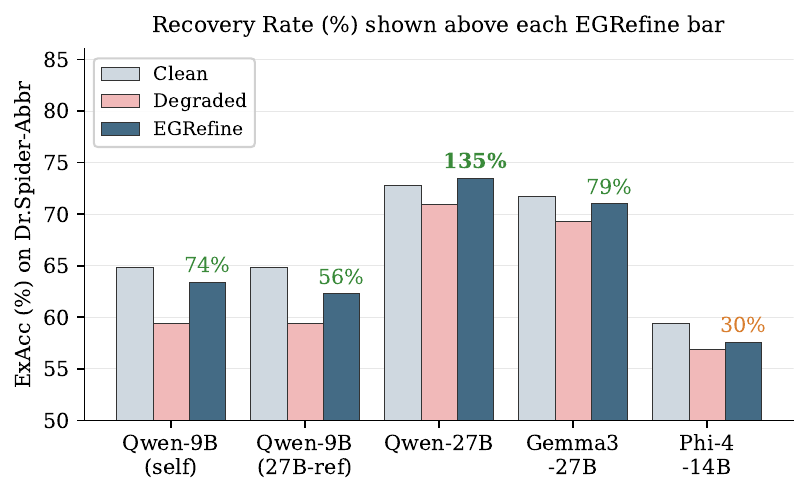}
\caption{Cross-model results on Dr.Spider Schema-Abbreviation
(DIN-SQL algorithm).
All EGRefine bars use 27B-refined schema unless marked ``self''.
Recovery Rate (\%) annotated above each EGRefine bar.}
\label{fig:model_strength}
\end{figure}

\subsubsection{Transfer Matrix}

Table~\ref{tab:transfer} presents the full refine$\times$eval
transfer matrix for Qwen3.5-9B and 27B.
\textbf{Strong$\to$Weak transfer is nearly lossless}: 27B-refined
schema evaluated on 9B trails 9B self-refinement by less than
1\,pp on every system.
\textbf{Weak$\to$Strong transfer can match or exceed self-refinement}:
9B-refined schema evaluated on 27B \emph{surpasses} 27B
self-refinement on DIN-SQL (79.53 vs.\ 79.21).
\textbf{Weak models should not self-refine}: Phi-4's self-refined
schema yields $-$0.63\,pp on C3 versus $+$0.74\,pp under 27B
refinement---weak backbones lack the capability to generate
high-quality candidates, making Phase~2 the bottleneck.

\begin{table}[t]
\centering
\caption{Refine $\times$ Eval transfer matrix on Dr.Spider
Schema-Abbreviation. $\Delta$ = improvement over eval model's
no-refinement baseline.}
\label{tab:transfer}
\renewcommand{\arraystretch}{1.15}
\setlength{\tabcolsep}{3pt}
\begin{tabular}{cc|ccc}
\toprule
\textbf{Refine} & \textbf{Eval} & \textbf{C3 ($\Delta$)} & \textbf{DIN ($\Delta$)} & \textbf{MAC ($\Delta$)} \\
\midrule
9B  & 9B  & 63.34 (+3.96) & 75.81 (+5.18) & 53.63 (+2.11) \\
27B & 9B  & 62.36 (+2.98) & 75.60 (+4.97) & 53.49 (+1.97) \\
\midrule
9B  & 27B & 73.22 (+2.35) & \textbf{79.53} (+2.80) & 59.34 ($-$0.25) \\
27B & 27B & \textbf{73.43} (+2.56) & 79.21 (+2.48) & \textbf{60.22} (+0.63) \\
\bottomrule
\end{tabular}
\end{table}

\subsection{RQ4: Deployment on Real-World Schemas}
\label{ssec:rq4}

While Dr.Spider uses controlled perturbations, real-world
schemas exhibit naturally occurring naming issues.
We validate on BIRD (curated academic schemas) and BEAVER
(enterprise data warehouses).

\subsubsection{BIRD: Execution Verification is Essential}

On BIRD's professionally curated schemas, \textsc{EGRefine}
conservatively refines only 28 of 798 columns (3.5\%), yielding
modest but positive aggregate improvements on 27B
($+$0.20/$+$0.65/$+$1.24 on C3/DIN/MAC).
A finer-grained analysis in \S\ref{ssec:touching} reveals an
algorithm-specific differential: on the 126-query touching subset,
DIN-SQL gains $+$6.35\,pp while C3 ($-$1.59\,pp) and MAC-SQL
($-$3.17\,pp) do not benefit---a pattern we trace to agent-level
brittleness (\S\ref{ssec:mac_analysis}).

\noindent\textbf{Robustness of the BIRD aggregate effect.}\;
Aggregate gains of this magnitude on single-run dev-set evaluation
lack the statistical resolution to be claimed from any point
estimate alone; we therefore ground the effect in four convergent
lines of evidence:
(i) the workload-holdout (\S\ref{ssec:rq_holdout}) yields
$+$1.04\,pp on $n{=}96$ unseen queries---5$\times$ the in-loop
$+$0.20\,pp on C3;
(ii) the touching-subset decomposition (\S\ref{ssec:touching})
shows DIN-SQL amplifying to $+$6.35\,pp on the 126 affected
queries ($\approx$10$\times$ the aggregate)---dilution, not noise;
(iii) cross-backbone reproduction on Qwen3.5-9B
(Table~\ref{tab:bird_9b}) gives direction-consistent gains
($+$1.50/$+$1.50/$-$0.46);
(iv) the coverage--improvement linearity (\S\ref{ssec:coverage})
predicts BIRD's $+$0.8 to $+$1.4\,pp range from its 3.39\%
query-weighted coverage---the observed $+$0.20 to $+$1.24\,pp
falls within this envelope.
Stochastic noise alone would not exhibit all four.

The critical finding emerges on the weaker 9B backbone
(Table~\ref{tab:bird_9b}): LLM-Direct \emph{degrades} all three
systems ($-$0.52/$-$1.63/$-$1.43), while \textsc{EGRefine}
improves C3 and DIN-SQL ($+$1.50/$+$1.50) with marginal regression
on MAC-SQL ($-$0.46).
The 0.97--3.13\,pp separation between verified and unverified
refinement establishes execution-grounded verification as
necessary for safe refinement on smaller models.

\begin{table}[t]
\centering
\caption{Execution verification necessity on BIRD
(Qwen3.5-9B, 1{,}534 queries, no hints).}
\label{tab:bird_9b}
\renewcommand{\arraystretch}{1.15}
\begin{tabular}{l*{3}{c}}
\toprule
\textbf{Method} & \textbf{C3} & \textbf{DIN-SQL} & \textbf{MAC-SQL} \\
\midrule
No Refinement         & 28.75 & 31.75 & 20.60 \\
+ LLM-Direct          & 28.23 \down{0.52} & 30.12 \down{1.63} & 19.17 \down{1.43} \\
+ \textbf{EGRefine}   & \textbf{30.25} \up{1.50} & \textbf{33.25} \up{1.50} & 20.14 \down{0.46} \\
\midrule
Separation (EGR$-$Dir) & +2.02 & +3.13 & +0.97 \\
\bottomrule
\end{tabular}
\end{table}

\subsubsection{BEAVER: Scoping the Signal-Dependency of EGRefine}

BEAVER's enterprise schemas (1{,}439 columns across 4 databases,
88 queries) present an unusual regime: baseline ExAcc is
substantially lower than any other setting we examined
(C3: 7.95\%, DIN-SQL: 7.95\%, MAC-SQL: 3.41\%) due to schemas
whose complexity, domain specificity, and scale exceed current
Text-to-SQL system capabilities.
This regime exposes a fundamental precondition: execution-grounded
verification requires a non-trivial baseline for discriminative signal.

\noindent\textbf{Qwen3.5-27B abstains; MiniMax-M2.7 commits but
does not move downstream ExAcc.}\;
With Qwen3.5-27B as refiner, Phase~3 rejects all 106 Phase~1
candidates---the conservative rule operating as designed when
$\Delta$ scores cannot be reliably distinguished from noise.
Testing whether refiner capability is the bottleneck, we repeat
with MiniMax-M2.7 (Table~\ref{tab:beaver_minimax}): Phase~1
screens 172 candidates and Phase~3 commits 4 refinements.
Despite this, downstream ExAcc on C3 and DIN-SQL is unchanged
($+$0.00\,pp): the 4 refined columns simply do not intersect
the columns gold queries reference (BEAVER's gold queries span
dozens of columns under stacked JOINs).
The invariance is a downstream ceiling, not Phase~3 failure
(cf.\ \S\ref{ssec:beaver_analysis}).
The lone exception is MAC-SQL ($+$2.27\,pp), which inverts the
Dr.Spider pattern where MAC benefits least: on BEAVER's
extremely difficult queries, MAC-SQL's multi-agent decomposition
rarely reaches the aggressive re-decomposition stage that causes
regressions on easier benchmarks, providing independent
out-of-distribution validation of the agent-instability
hypothesis (\S\ref{ssec:mac_analysis}).

\begin{table}[t]
\centering
\caption{BEAVER with MiniMax-M2.7 refiner.
Despite Phase 1 recall (172 vs.\ Qwen's 106) and 4 commits
(vs.\ 0), C3 and DIN-SQL show zero improvement: gold queries
do not intersect the refined column set.}
\label{tab:beaver_minimax}
\renewcommand{\arraystretch}{1.15}
\begin{tabular}{lccc}
\toprule
\textbf{Algorithm} & \textbf{NoRef} & \textbf{Refined (MiniMax-M2.7)} & $\bm{\Delta}$ \\
\midrule
C3       & 7.95\% (7/88) & 7.95\% (7/88) & $+$0.00 \\
DIN-SQL  & 7.95\% (7/88) & 7.95\% (7/88) & $+$0.00 \\
MAC-SQL  & 3.41\% (3/88) & 5.68\% (5/88) & $+$2.27 \\
\bottomrule
\end{tabular}
\end{table}

\subsection{Complementarity with Domain Knowledge}
\label{ssec:rq5}

BIRD provides optional per-query \emph{evidence} (domain hints).
A 2$\times$2 ablation (Table~\ref{tab:evidence}) confirms
\textsc{EGRefine} provides additional gains on top of evidence
across all three systems on 27B ($+$0.58/$+$1.43/$+$0.59):
structural disambiguation (column renaming) and semantic
disambiguation (domain hints) are orthogonal---evidence
explains a column's meaning in a specific query context, while
refinement makes the column name inherently more interpretable
regardless of query.

\begin{table}[t]
\centering
\caption{BIRD evidence $\times$ refinement ablation
(Qwen3.5-27B, 1{,}534 queries). Evidence and
\textsc{EGRefine} are complementary.}
\label{tab:evidence}
\renewcommand{\arraystretch}{1.15}
\begin{tabular}{l*{3}{c}}
\toprule
\textbf{Configuration} & \textbf{C3} & \textbf{DIN} & \textbf{MAC} \\
\midrule
No Refinement            & 41.53 & 38.59 & 29.99 \\
+ Evidence               & 54.24 & 62.13 & 46.74 \\
+ EGRefine               & 41.72 & 39.24 & 31.23 \\
+ Both                   & \textbf{54.82} & \textbf{63.56} & \textbf{47.33} \\
\midrule
$\Delta$ (Both $-$ Evidence) & +0.58 & +1.43 & +0.59 \\
\bottomrule
\end{tabular}
\end{table}

\subsection{Workload-Holdout Validation}
\label{ssec:rq_holdout}

A natural question is whether BIRD gains reflect overfitting
to the queries used during Phase~3 verification, since our
setup uses BIRD's dev set as both the representative workload
$Q$ and the evaluation set.
We focus the holdout on BIRD because its queries were curated
alongside the schemas, making coupling more acute than on
Dr.Spider (where queries predate any perturbation; additionally,
Dr.Spider's small academic schemas lack the column-description
metadata required for our isolation protocol).

We constructed an independent holdout against a fixed EGRefine
refined schema (unchanged from main experiments).
We generated 110 (NL, SQL) pairs for BIRD's 11 databases (10 per
database) using DeepSeek-V4-Pro under a strict isolation
protocol: NL questions were authored using only BIRD's official
column descriptions, with sample-data headers replaced by
\texttt{col\_1}, \texttt{col\_2}, \ldots; only after the NL was
finalized did the model see real column names to author gold SQL.
Execution validation retained 96 queries (executed successfully,
non-empty result sets $\leq$1000 rows, non-duplicate within database).
These are non-trivial: 65/96 aggregations, 37/96 \texttt{GROUP BY},
32/96 \texttt{ORDER BY}, 22/96 subqueries; mean SQL length 184
characters, mean JOIN count 0.93.

\begin{table}[t]
\centering
\caption{Workload-holdout validation on BIRD with C3 +
Qwen3.5-27B. The 96 holdout queries are LLM-generated and
execution-validated, unseen during Phase~3.}
\label{tab:holdout}
\begin{tabular}{lcccc}
\toprule
Setup & $n$ & NoRef & EGRefine & $\Delta$ \\
\midrule
BIRD dev (in-loop)    & 1534 & 41.53 & 41.72 & +0.19 \\
BIRD holdout (unseen) &   96 & 47.92 & 48.96 & \textbf{+1.04} \\
\bottomrule
\end{tabular}
\end{table}

The holdout $\Delta$ ($+$1.04\,pp) exceeds the in-loop $\Delta$
($+$0.19\,pp) by 5$\times$ (Table~\ref{tab:holdout}).
We treat this as preliminary evidence mitigating the
workload-overfit concern rather than decisive refutation
(single cell, $n{=}96$, C3 alone), consistent with our pipeline
structure: Phases~1--2 produce candidates without consulting
queries, and Phase~3's discrete execution-grounded signal lacks
the bandwidth to encode dev-set-specific patterns into column
names.

\subsection{Identifier Replacement vs.\ Description
Annotation}
\label{ssec:rq_descbaseline}

A natural alternative to renaming columns is to leave column
identifiers untouched and inject column descriptions as SQL
comments into the schema prompt.
We test whether this non-invasive route is sufficient.

\noindent\textbf{Protocol.}\;
We reuse EGRefine's Phase~1 screened column set on Dr.Spider-Abbr
(665 columns) and prompt Qwen3.5-27B to generate one
8--25-word SQL comment per column from its name, neighbor
columns, and 20 sampled values.
The descriptions are injected as inline SQL comments
(e.g., \texttt{gf TEXT,~--~The country's form of government,
e.g., Republic, Monarchy}); column identifiers are unchanged.
Critically, the description baseline performs no execution
verification---a single-pass annotation matching the simplest
deployment without our full pipeline.
We evaluate four cells with C3 + Qwen3.5-27B
(Table~\ref{tab:desc_baseline}).

\begin{table}[t]
\centering
\caption{Identifier replacement vs.\ description annotation
(C3 + Qwen3.5-27B, $n{=}2853$). Description annotations are
LLM-generated SQL comments. ExAcc differs slightly from
Table~\ref{tab:main_drspider} due to prompt-formatting
tokenization effects.}
\label{tab:desc_baseline}
\begin{tabular}{lccr}
\toprule
Schema & Description & ExAcc & $\Delta$ vs.\ NoRef \\
\midrule
NoRef           & no  & 70.83 & --- \\
NoRef           & yes & 72.48 & +1.65 \\
EGRefine        & no  & 73.46 & +2.63 \\
EGRefine        & yes & 74.69 & \textbf{+3.86} \\
\bottomrule
\end{tabular}
\end{table}

\noindent\textbf{Result.}\;
Description annotation alone yields $+$1.65\,pp---non-trivially
helpful, ruling out a strawman baseline.
Identifier replacement (EGRefine) yields $+$2.63\,pp, exceeding
description annotation by $0.98$\,pp on the same Phase~1 column
set and same backbone.
The combined treatment yields $+$3.86\,pp, $0.42$\,pp below the
linear sum of the two---indicating partial redundancy:
once a column is renamed to a semantically clear identifier,
an additional description provides diminishing returns.

\noindent\textbf{Scope of the comparison.}\;
Two design differences confound a strict
``identifier vs.\ description'' attribution.
First, EGRefine includes execution-grounded selection (Phase~3)
while the description baseline applies a single-pass LLM
annotation; the gap therefore conflates identifier-vs-annotation
and verified-vs-unverified selection.
Second, the descriptions are LLM-generated rather than
expert-authored.
We frame the conclusion narrowly: under the specific configuration
we test, identifier replacement outperforms by 1.6$\times$, and
the two mechanisms remain partially complementary.
A stronger description baseline with execution-grounded selection
paralleling Phase~3 is a natural future extension.

\subsection{Screening Funnel and Refinement Examples}
\label{ssec:rq6}

\subsubsection{Phase 1 Screening Funnel}

Table~\ref{tab:funnel} traces the search-space reduction
from raw columns to final refinements.
Across all benchmarks, $\sim$66--93\% of columns are filtered
by structural exclusion and LLM screening (Phase~1), and Phase~3
verification further prunes to 3.5--7.6\% of the original schema.
On refined columns, the average quality gain
($\overline{\Delta}_{\mathrm{refined}}$) ranges from 19.95 to
26.62\,pp---each committed refinement delivers substantial benefit
on its affected query subset.
On BEAVER, Qwen3.5-27B's conservative Phase~3 commits zero;
MiniMax-M2.7 raises Phase~1 recall to 12.0\% and commits 4
refinements, yet downstream impact remains bounded by the
benchmark's column-insensitive query structure
(\S\ref{ssec:beaver_analysis}).

\begin{table*}[t]
\centering
\caption{Phase 1 screening funnel.
$m$: total columns, $n$: candidates after Phase 1,
$n_r$: finally refined after Phase 3.
$\overline{\Delta}_{\mathrm{refined}}$ is averaged over
refined columns on their affected query subsets.}
\label{tab:funnel}
\renewcommand{\arraystretch}{1.15}
\setlength{\tabcolsep}{5pt}
\begin{tabular}{llccccccc}
\toprule
\textbf{Benchmark} & \textbf{Backbone} & \textbf{DBs} &
$\bm{m}$ & $\bm{n}$ (P1) & $\bm{n_r}$ (final) &
\textbf{Excl.\ rate} & \textbf{Compr.\ $n_r/m$} &
$\bm{\overline{\Delta}_{\mathrm{refined}}}$ \\
\midrule
Dr.Spider-Abbr   & Qwen3.5-27B & 90 & 1985 & 665 & 81  & 66.5\% & 4.08\%  & 20.68\,pp \\
Dr.Spider-Abbr   & Qwen3.5-9B  & 90 & 1985 & 841 & 151 & 57.6\% & 7.61\%  & 20.94\,pp \\
BIRD             & Qwen3.5-27B & 11 & 798  & 251 & 28  & 68.5\% & 3.51\%  & 26.62\,pp \\
BIRD (+evidence) & Qwen3.5-27B & 11 & 798  & 252 & 28  & 68.4\% & 3.51\%  & 21.76\,pp \\
BIRD             & Qwen3.5-9B  & 11 & 798  & 308 & 39  & 61.4\% & 4.89\%  & 19.95\,pp \\
BEAVER           & Qwen3.5-27B & 4  & 1439 & 106 & 0   & 92.6\% & 0.00\%  & —         \\
BEAVER           & MiniMax-M2.7 & 4 & 1439 & 172 & 4   & 88.0\% & 0.28\%  & —         \\
\bottomrule
\end{tabular}
\end{table*}

\subsubsection{Case Analysis}

Table~\ref{tab:case_study} presents representative refinement
outcomes from \textsc{EGRefine}'s scoring logs.
Three patterns emerge: (i) the largest gains concentrate on
short, uninformative abbreviations where ExAcc is near zero
on the original and the refined name recovers near-perfect
accuracy; (ii) for natural-schema anonymized codes (e.g.,
BIRD's \texttt{financial.district} \texttt{A3}/\texttt{A12}/\texttt{A16}),
Phase 3 identifies domain-grounded replacements with moderate
but stable gains; (iii) the conservative rule correctly retains
the original name when improvement falls below $\tau_{\min}$
(e.g., \texttt{cntry\_code} $\to$ \texttt{nationality} yields
only $+$0.028).

\begin{table*}[t]
\centering
\caption{Representative refinement cases from the pipeline.
$\dagger$: conservative rule retained original name.}
\label{tab:case_study}
\renewcommand{\arraystretch}{1.15}
\setlength{\tabcolsep}{4pt}
\begin{tabular}{llllccc}
\toprule
\textbf{Benchmark} & \textbf{DB} & \textbf{Table.Column} & \textbf{Original $\to$ Refined} & \textbf{Score$_0$} & \textbf{Score$_r$} & $\bm{\Delta}$ \\
\midrule
\multicolumn{7}{l}{\emph{Successful refinements: uninformative abbreviation $\to$ semantic name}} \\
Dr.Spider-Abbr & battle\_death\_2 & ship.dos & \texttt{dos} $\to$ \texttt{disposition} & 0.000 & 1.000 & $+$1.000 \\
Dr.Spider-Abbr & tvshow\_1        & TV\_Channel.par & \texttt{par} $\to$ \texttt{aspect\_ratio} & 0.250 & 1.000 & $+$0.750 \\
BIRD           & california\_schools & schools.DOCType & \texttt{DOCType} $\to$ \texttt{school\_type} & 0.000 & 1.000 & $+$1.000 \\
BIRD           & california\_schools & schools.EILName & \texttt{EILName} $\to$ \texttt{school\_level} & 0.000 & 1.000 & $+$1.000 \\
\midrule
\multicolumn{7}{l}{\emph{Natural-schema refinements: anonymized codes $\to$ domain-grounded names}} \\
BIRD           & financial & district.A3  & \texttt{A3}  $\to$ \texttt{region\_name}        & 0.083 & 0.194 & $+$0.111 \\
BIRD           & financial & district.A12 & \texttt{A12} $\to$ \texttt{population\_density} & 0.250 & 0.500 & $+$0.250 \\
BIRD           & financial & district.A16 & \texttt{A16} $\to$ \texttt{total\_population}   & 0.000 & 0.250 & $+$0.250 \\
\midrule
\multicolumn{7}{l}{\emph{Conservative retention$^\dagger$: best candidate below $\tau_{\min}=0.05$}} \\
Dr.Spider-Abbr & wta\_1\_0 & players.cntry\_code & \texttt{cntry\_code} $\to$ \texttt{cntry\_code}$^\dagger$ & 0.944 & 0.972 & $+$0.028 \\
Dr.Spider-Abbr & pets\_1\_2 & Pets.wt & \texttt{wt} $\to$ \texttt{wt}$^\dagger$ & 0.917 & 0.958 & $+$0.042 \\
\bottomrule
\end{tabular}
\end{table*}


\section{Analysis and Insights}
\label{sec:analysis}

Section~\ref{sec:experiments} established that \textsc{EGRefine}
delivers consistent improvements across benchmarks and models.
This section dissects \emph{why} it works and \emph{where its
limits lie}, through six analyses operating at increasing
granularity: database (\S\ref{ssec:coverage}), column
(\S\ref{ssec:phase3_necessity}), query
(\S\ref{ssec:flip_analysis}), and three diagnostic studies---of
the algorithm that benefits least (\S\ref{ssec:mac_analysis}),
the benchmark where applicability ends
(\S\ref{ssec:beaver_analysis}), and a query-subset decomposition
exposing dilution and algorithm-specific differentials
(\S\ref{ssec:touching}).

\subsection{Database-Level: Coverage Predicts Improvement}
\label{ssec:coverage}

A central question is whether aggregate ExAcc improvement tracks
the amount of schema actually modified.
We analyze 606 (benchmark, backbone, algorithm, database)
configurations across Dr.Spider-Abbr and BIRD, defining
\emph{coverage} as $n_r^{(D)} / m^{(D)}$ for each database.

Across the 273 database-algorithm pairs with zero refined columns,
$\Delta$ is identically zero (the refined schema is byte-identical
to the original), confirming the conservative rule's safety
property: when Phase 3 finds no improvement, the schema is left
untouched.

Aggregating by (benchmark, backbone, algorithm) and weighting by
per-database query count, coverage predicts improvement cleanly:
each percentage point of coverage yields approximately
$+0.25$ to $+0.40$\,pp of benchmark-wide ExAcc gain.
Binning by coverage (Figure~\ref{fig:coverage_delta}) confirms
this: databases with $>$15\% coverage achieve a $+6.30$\,pp
$n$-weighted aggregate gain with 62.5\% of configurations
exhibiting positive deltas.
This explains the BIRD results: with 3.39\% query-weighted
coverage (3.51\% column-fraction, \S\ref{ssec:rq6}), the observed
$+0.20$ to $+1.24$\,pp falls within the predicted
$+0.8$ to $+1.4$\,pp range.
The modest BIRD improvement is not weak methodology but a direct
consequence of BIRD's schemas already being well-named---only
3.5\% of columns meet the refinement bar.

\begin{figure}[t]
\centering
\includegraphics[width=\columnwidth]{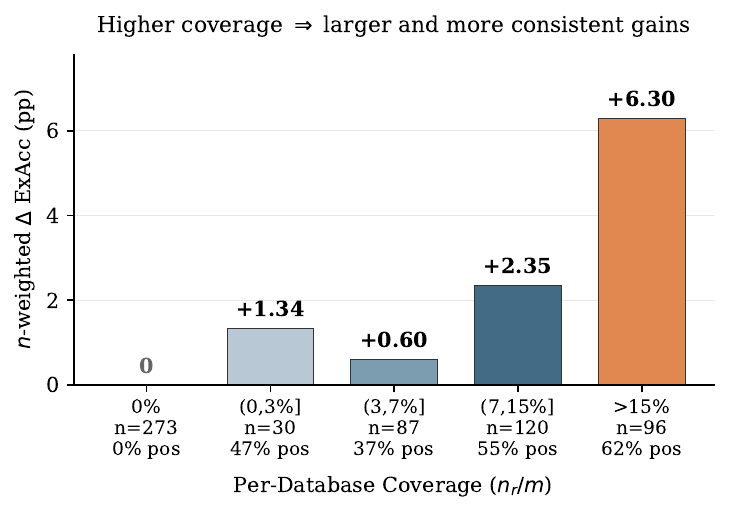}
\caption{ExAcc $\Delta$ by per-database coverage bin (606
configurations, $n$-weighted by query count). Positive-$\Delta$
rate annotated below each bar.
Databases with $>15\%$ coverage achieve the largest gains
(+6.30\,pp) and highest positive rate (62.5\%).}
\label{fig:coverage_delta}
\end{figure}

\subsection{Column-Level: LLM Top-1 is Unreliable}
\label{ssec:phase3_necessity}

RQ2's ablation showed that removing Phase 3 degrades performance.
We now measure \emph{how often} Phase 3 overrides the LLM's
top-ranked candidate (Figure~\ref{fig:phase3_overrule}).

Across all six configurations, Phase 3 disagrees with the LLM's
top choice in 60--80\% of cases; on BIRD with 27B, the adoption
rate is only 19.5\%---four out of five LLM recommendations are
overridden.
When Phase 3 selects a non-top-1 outcome, the positive-to-negative
ratio (Phase 3's pick beats LLM top-1) ranges from $\approx$8:1
(Dr.Spider-Abbr 27B) to $\approx$35:1 (BIRD 27B): when Phase 3
disagrees, it is overwhelmingly right.

The dominant overrule category is \emph{overrule-to-original}:
LLM proposes a change, but Phase 3 rejects all candidates and
retains the original name (62.7\% of BIRD-27B verifications,
all 106 BEAVER decisions).
LLMs systematically over-recommend changes; execution grounding
filters this bias.

\begin{figure}[t]
\centering
\includegraphics[width=\columnwidth]{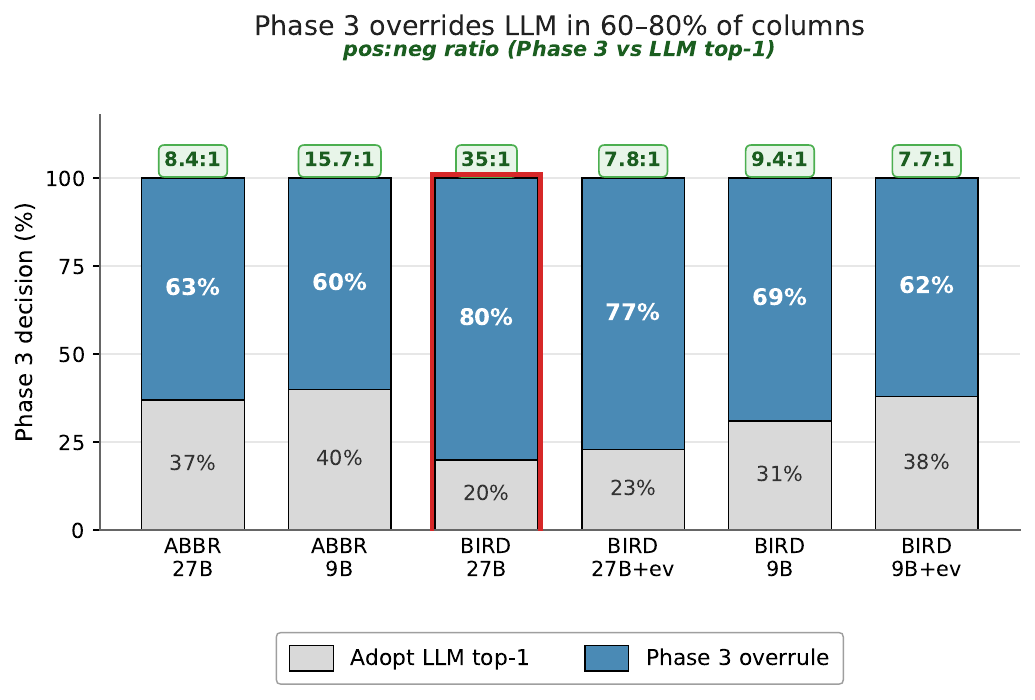}
\caption{Phase 3 overrides LLM's top-1 candidate in 60--80\%
of verified columns. Green text shows the pos:neg ratio
when Phase 3 disagrees with LLM (almost always Phase 3 wins).
BIRD 27B (highlighted) has the highest override rate (80.5\%)
and cleanest pos:neg (35:1).}
\label{fig:phase3_overrule}
\end{figure}

\subsection{Query-Level: Flips Reveal Systematic Repair}
\label{ssec:flip_analysis}

The finest-grained evidence comes from per-query correctness
changes.
For each query, we classify its (NoRef $\to$ Refined) outcome
into four categories and focus on \textbf{C$\to$W}
(NoRef correct, refined wrong---a regression) and
\textbf{W$\to$C} (NoRef wrong, refined correct---a genuine repair).

\noindent\textbf{\textsc{EGRefine} achieves W$\to$C $\geq$ C$\to$W in 15/18 configurations.}
On Dr.Spider-Abbr with 27B DIN-SQL, the repair-to-break ratio
reaches 6.46:1 (84 repairs vs 13 breaks); on 9B DIN-SQL it is
4.15:1 (195 vs 47).
The three configurations with ratio $<$1 are all on MAC-SQL with
small margins ($\leq$13 queries)---a pattern we trace to its
source in \S\ref{ssec:mac_analysis}.
The systematic positive skew rules out the hypothesis that
aggregate improvements are random noise.

\noindent\textbf{LLM-Direct fails the flip test on 9B BIRD,}
achieving ratios of 0.69 on DIN-SQL (56 repairs vs.\ 81 breaks)
and 0.68 on MAC-SQL (47 vs.\ 69)---it breaks more queries than it
repairs.
Comparing \textsc{EGRefine} to LLM-Direct on the same
(benchmark, backbone, algorithm) cell, EGRefine breaks fewer
queries in all 12 cells (median reduction: 63 queries/cell).
Phase 3's dominant contribution is preventing regressions, not
producing additional repairs.

\begin{figure}[t]
\centering
\includegraphics[width=\columnwidth]{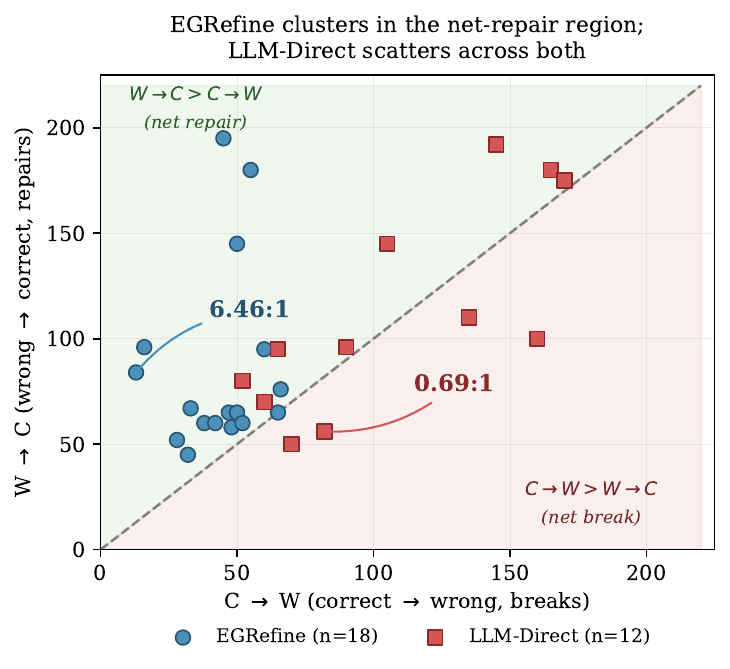}
\caption{Query-level C$\to$W vs W$\to$C flips across 30
configurations: 18 \textsc{EGRefine} (incl.\ 6 BIRD+evidence
variants) and 12 LLM-Direct.
\textsc{EGRefine} clusters in the upper-left net-repair region
(15/18 with ratio $>$1, max 6.46:1); LLM-Direct scatters and
includes severe failures below the diagonal (worst DIN-SQL
point 0.69:1, worst overall 0.68:1 on MAC-SQL).}
\label{fig:flips}
\end{figure}

\subsection{Error Modes: Why MAC-SQL Benefits Least}
\label{ssec:mac_analysis}

Under the main configuration ($\mathcal{M} = \{$C3, DIN-SQL$\}$),
MAC-SQL shows the smallest gain from refinement ($+$0.63\,pp
on Dr.Spider-Abbr, vs.\ $+$2.56 and $+$2.48 for C3 and DIN-SQL).
The cross-verifier experiment (\S\ref{ssec:rq2}) confirms this
gap is bounded rather than incidental: even when MAC-SQL is
itself a verifier and refinements are explicitly tuned to its
preferences, its recovery rate ($85.2\%$) matches DIN-SQL's
rather than approaching the higher recoveries observed for
algorithms with more stable schema linking.
We diagnose the cause on the 633 queries referencing refined
columns, ruling out two candidate explanations before
identifying the true mechanism.

\noindent\textbf{It is not a usage problem.}
MAC-SQL references refined names in 86.1\% of its predictions,
comparable to C3 (91.3\%) and DIN-SQL (92.4\%);
the 5--6\,pp gap cannot explain a 2--4\,pp ExAcc gap.

\noindent\textbf{It is not a dilution problem.}
Even on the 633-query touching subset, MAC-SQL gains only
$+$2.84\,pp while C3 and DIN-SQL gain $+$11.53 and $+$11.22\,pp
respectively.
The small gain is intrinsic, not averaged away.

\noindent\textbf{It is a structural rewriting problem.}
MAC-SQL's Pass$\to$Fail rate on the touching subset is
7.7\%---2--4$\times$ that of C3 (3.8\%) and DIN-SQL (2.1\%).
Among its 49 regressions, 73\% involve structural query edits:
33\% change SELECT column count, 24\% drop JOIN clauses, 26\%
drop aggregates, 10\% drop \texttt{WHERE} clauses.
For C3 and DIN-SQL, 77--79\% of regressions are non-structural
(identical query shape with minor semantic drift).
A representative example: on \emph{``how many types of government
are in Africa''} with \texttt{gf} $\to$ \texttt{government\_form},
MAC-SQL drops the \texttt{COUNT(DISTINCT)} aggregate---the refined
name triggers its multi-agent refiner to re-decompose the query,
occasionally losing essential clauses.
On the weaker 9B backbone, MAC-SQL improves $+$2.11\,pp, consistent
with the hypothesis that over-aggressive rewriting is a
capability-emergent property of stronger backbones.

This exposes an \emph{agent-level instability} in multi-agent
Text-to-SQL systems that is orthogonal to schema linking:
clearer column names can prompt over-aggressive query rewriting.
The cross-verifier robustness experiment (\S\ref{ssec:rq2})
confirms this is intrinsic to MAC-SQL's agent design: even when
MAC-SQL is itself a verifier, its 85.2\% recovery matches
DIN-SQL's rather than approaching less-brittle algorithms---the
bound is on the agent, not on verifier-set choice.
Schema refinement cannot offset downstream agent-level breakage;
agent-level stabilization is a complementary direction.

\subsection{Benchmark-Level Scope: Why BEAVER Probes a Different Task}
\label{ssec:beaver_analysis}

The near-zero improvement on BEAVER raises a deeper question:
is this an EGRefine limitation, or a property of the benchmark?
A closer look at BEAVER's gold SQL suggests the latter---it
probes a problem structurally distinct from standard Text-to-SQL.

\noindent\textbf{Structural divergence.}\;
Table~\ref{tab:beaver_structure} contrasts BEAVER with BIRD
and Dr.Spider-Abbr.
BEAVER's median gold query length is 3{,}312 characters
(vs.\ $\sim$100--200 for BIRD/Dr.Spider) with a median JOIN
count of 4 (vs.\ 0--2)---a 20--40$\times$ structural gap.
The gold SQL is predominantly ORM-generated (SQLAlchemy-style),
featuring \texttt{SELECT~*} with column aliases over stacked
\texttt{LEFT OUTER JOIN}s.

\begin{table}[t]
\centering
\caption{Gold SQL structural statistics across benchmarks.
BEAVER's query structure differs from standard Text-to-SQL
benchmarks by 1--2 orders of magnitude.}
\label{tab:beaver_structure}
\renewcommand{\arraystretch}{1.15}
\begin{tabular}{lccc}
\toprule
\textbf{Metric} & \textbf{BEAVER} & \textbf{BIRD} & \textbf{Dr.Spider-Abbr} \\
\midrule
Median query length (chars) & 3{,}312 & $\sim$150 & $\sim$115 \\
Max query length            & 10{,}168 & $\sim$1{,}000 & $\sim$400 \\
Median JOIN count           & 4 & 0--2 & 0--1 \\
Max JOIN count              & 13 & $\sim$5 & $\sim$3 \\
\bottomrule
\end{tabular}
\end{table}

\noindent\textbf{Solved queries are structurally trivial.}\;
The 7--9 queries any method solves on BEAVER are all $<$400
characters with 0--1 JOINs.
Queries exceeding 3{,}500 characters with $\geq$5 JOINs---about
70\% of BEAVER---fail under every backbone, method, and schema
variant we tested, with predicted SQL running $\sim$30 characters
against $\sim$3{,}000-character gold (code-generation-scale failure,
not schema comprehension).
This caps ExAcc at 8--10\%, bounding refinement's contribution
space by (ceiling $-$ baseline) $\times$ Pr(column naming affects
solvability), which approaches zero on BEAVER.
The MiniMax-M2.7 result (\S\ref{ssec:rq4}) sharpens this: even
when Phase 3 commits 4 refinements (versus Qwen's 0), C3 and
DIN-SQL remain unchanged because the refined columns do not
intersect the dozens of columns each gold query spans.
BEAVER thus exposes the boundary between \emph{schema-linking}
(which \textsc{EGRefine} addresses) and \emph{code-generation
complexity} (which dominates here)---future schema-refinement
research should consider whether its target benchmark falls in
the schema-bounded or generation-bounded regime.

\subsection{Query-Subset Decomposition: Concentration and Dilution}
\label{ssec:touching}

This subsection quantifies \emph{where} refinement helps at the
query granularity.
For each (benchmark, backbone, algorithm) cell, we partition
queries into the \emph{touching subset}---queries whose gold SQL
references $\geq$1 refined column---and its complement.
The reference set of refined columns is fixed for each
(benchmark, backbone) pair so NoRef and \textsc{EGRefine} are
evaluated on identical partitions.

\begin{table}[t]
\centering
\caption{Touching-subset decomposition.
$\Delta_{\text{touching}}$: per-query effect on queries
referencing $\geq$1 refined column;
$\Delta_{\text{full}}$: aggregate effect.
Cross-model row: 27B-refined schema served to 9B (DIN-SQL).}
\label{tab:touching_summary}
\renewcommand{\arraystretch}{1.0}
\setlength{\tabcolsep}{4pt}
\small
\begin{tabular}{llrrrr}
\toprule
Cell & Algo & $n_{\text{tch}}$ & $\Delta_{\text{tch}}$ & $\Delta_{\text{full}}$ & Ratio \\
\midrule
Dr.Spider 27B   & C3      &  633 & +11.53 & +2.56 & 4.50$\times$ \\
Dr.Spider 27B   & DIN     &  633 & +11.22 & +2.48 & 4.52$\times$ \\
Dr.Spider 27B   & MAC     &  633 & \phantom{0}+2.84 & +0.63 & 4.51$\times$ \\
Dr.Spider 9B    & C3      & 1126 & +10.04 & +3.96 & 2.53$\times$ \\
Dr.Spider 9B    & DIN     & 1126 & +13.14 & +5.19 & 2.53$\times$ \\
Dr.Spider 9B    & MAC     & 1126 & \phantom{0}+5.33 & +2.10 & 2.54$\times$ \\
\midrule
BIRD 27B        & DIN     &  126 & \phantom{0}+6.35 & +0.65 & 9.77$\times$ \\
BIRD 27B        & C3      &  126 & \phantom{0}$-$1.59 & +0.20 & --- \\
BIRD 27B        & MAC     &  126 & \phantom{0}$-$3.17 & +1.24 & --- \\
BIRD 9B         & C3      &  320 & \phantom{0}+7.19 & +1.64 & 4.39$\times$ \\
BIRD 9B         & DIN     &  320 & \phantom{0}+5.62 & +1.64 & 3.43$\times$ \\
\midrule
27B$\to$9B      & DIN     &  633 & +21.48 & +8.95 & 2.40$\times$ \\
\bottomrule
\end{tabular}
\end{table}

\noindent\textbf{Concentration.}\;
On Dr.Spider-Abbr (27B), C3 and DIN-SQL show 4.5$\times$
concentration ($+$11.5\,pp touching vs.\ $+$2.5\,pp full); the
largest per-query lift is $+$21.48\,pp on the cross-model cell
(27B-refined schema served to 9B DIN-SQL).
This gives a query-level reading of the coverage--improvement
linearity: modest aggregate $\Delta$ on benchmarks with limited
coverage reflects the small fraction of touching queries
($\approx$22\% on Dr.Spider-Abbr, $\approx$8\% on BIRD), not weak
per-query effect.

\noindent\textbf{Algorithm-specific differential on BIRD.}\;
BIRD's touching subset (n$=$126) shows divergent per-algorithm
behavior.
DIN-SQL gains $+$6.35\,pp---consistent with the Dr.Spider pattern.
C3 ($-$1.59\,pp) and MAC-SQL ($-$3.17\,pp) do not benefit on
touching queries; their full-set positive aggregates come from
non-touching queries ($+$0.36\,pp, $+$1.63\,pp respectively).
We trace this to the algorithm-specific mechanism diagnosed in
\S\ref{ssec:mac_analysis}: column-name token shifts disrupt
brittle prompting in C3's schema-pruning and MAC-SQL's
multi-agent decomposition, while DIN-SQL's structured
chain-of-thought prompting~\cite{wei2022cot} is more robust.

\noindent\textbf{Localized effect and prompt-noise floor.}\;
Refinements predominantly affect touching queries.
On Dr.Spider in reuse-mode, non-touching $\Delta$ is zero by
construction; on BIRD without reuse, non-touching $\Delta$
ranges from $+$0.14 to $+$1.63\,pp---prompt-level noise rather
than systematic distortion.

\section{Conclusion}
\label{sec:conclusion}

We presented \textsc{EGRefine}, a four-phase pipeline that
formalizes Text-to-SQL schema refinement as a constrained
optimization problem and solves it through execution-grounded
verification.
The framework couples column-local non-degradation with
database-level query equivalence (via view-based materialization),
producing a durable refined schema that any downstream Text-to-SQL
system can consume without modification.
Across Dr.Spider, BIRD, and BEAVER and four LLM backbones,
execution feedback proves to be the load-bearing component:
it converts unreliable LLM preferences into empirically dependable
refinements, transfers across model families to enable
refine-once, serve-many-models deployment, and correctly
abstains where the task exceeds current Text-to-SQL capabilities.

\noindent\textbf{Limitations and Scope.}\;
\textsc{EGRefine} produces a refined schema specific to the
(schema, workload~$Q$, verifier-set~$\mathcal{M}$) triple under
which it was computed; the resulting artifact is durable for
fixed inputs but not absolute.
Three limitations follow from this scope.

\emph{First}, recovery rates vary across Text-to-SQL architectures.
MAC-SQL benefits least on controlled perturbations, which
\S\ref{ssec:mac_analysis} traces to an agent-level
query re-decomposition instability orthogonal to schema linking.
Agent-level stabilization is a complementary direction to
schema-level preprocessing, not a competing one.

\emph{Second}, on enterprise schemas with very low baseline
ExAcc (BEAVER, $<$10\%), Phase~3's discriminative signal becomes
sparse and the conservative rule abstains.
Even with a stronger refiner (MiniMax-M2.7, \S\ref{ssec:rq4}),
downstream ExAcc on C3 and DIN-SQL remains unchanged: this
regime is dominated by SQL-generation rather than schema-linking
bottlenecks.
The coverage--improvement linearity (\S\ref{ssec:coverage}) and
a baseline-ExAcc threshold together provide a pre-deployment
applicability test; extending \textsc{EGRefine} to very-low-baseline
regimes requires raising the effective baseline first, e.g.,
through query decomposition or schema-hint integration.

\emph{Third}, refinement is computed against a representative
query workload $Q$ with ground-truth SQL.
For organizations deploying Text-to-SQL, curating such a workload
is typically a prerequisite---not a byproduct---of evaluation,
since model and algorithm selection themselves require it;
\textsc{EGRefine} reuses this artifact at no additional cost.
The workload-holdout validation (\S\ref{ssec:rq_holdout}) gives
preliminary evidence that benefits transfer to independently
distributed queries with similar schema profile.
Robustness under workload drift and extensions to weaker
supervision (e.g., SQL logs without paired NL) are left to
future work.

{\footnotesize
\bibliographystyle{ieeetr}
\bibliography{refs}}

\end{document}